\newcommand{\field}[1]{\mathbb{#1}}
\newcommand{\N}{\field{N}}
\newcommand{\R}{\field{R}}
\newcommand{\C}{\field{C}}
\newcommand{\HH}{\mathscr H}
\newcommand{\DD}{\mathcal D}
\newcommand{\FF}{\mathcal F}
\newcommand{\hh}{\mathfrak h}
\newcommand{\ID}{\mathds{1}} 
\newcommand{\Ima}{\operatorname{Im}}
\newcommand{\Rea}{\operatorname{Re}}
\newcommand{\limop}[2][\infty]{\lim\limits_{#2 \rightarrow #1}}
\newcommand{\liminfop}[2][\infty]{\liminf\limits_{#2 \rightarrow #1}}
\newcommand{\limsupop}[2][\infty]{\limsup\limits_{#2 \rightarrow #1}}
\newcommand{\f}{v}
\newcommand{\chibar}{\overline{\chi}}
\newcommand{\eps}{\varepsilon}
\newcommand{\abs}[1]{\mbox{$\left| #1 \right| $}}                 
\newcommand{\norm}[1]{\mbox{$\left\| #1 \right\|$}}           
\newcommand{\sprod}[2]{\mbox{$\left\langle #1,#2 \right\rangle$}}        
\newcommand{\fock}{\FF}
\newcommand{\hilbert}{\HH}
\newcommand{\finitepart}{\hilbert_0}
\newcommand{\cutoff}{\Lambda}
\newcommand{\ir}{K}         
\newtheorem{theorem}{Theorem}[section]
\newtheorem{lemma}[theorem]{Lemma}
\newtheorem{corollary}[theorem]{Corollary}
\newtheorem{prop}[theorem]{Proposition}
\theoremstyle{plain}
\title{Self-Adjointness and Domain of the Fr\"ohlich Hamiltonian}
\author{M.~Griesemer\footnote{marcel.griesemer@mathematik.uni-stuttgart.de} and A.~W\"unsch\footnote{andreas.wuensch@mathematik.uni-stuttgart.de}\\  
\small Fachbereich Mathematik, Universit\"at Stuttgart, D-70569 Stuttgart, Germany}  
\date{}
\begin{document}
\maketitle

\begin{abstract}
In the large polaron model of H.~Fr\"ohlich, the electron-phonon
interaction is a small perturbation in form sense, but a large
perturbation in operator sense. This means that the form-domain of
the Hamiltonian is not affected by the interaction but the domain of
self-adjointness is.  In the particular case of the Fr\"ohlich model, we are
nevertheless able, thanks to a recently published new operator bound, to give an explicit characterization of the domain
in terms of a suitable dressing transform. Using the mapping properties of
this dressing transform, we analyse the smoothness of vectors in the
domain of the Hamiltonian with respect to the position of the electron. 
\end{abstract}


\section{Introduction}   \label{sec:introduction}

A popular model for the description of an electron in a polar crystal due to Fr\"ohlich, Pelzer and Zienau is based on the formal expression 
\begin{equation}
\label{def-H}
     -\Delta +N + \sqrt{\alpha}\int_{\R^3}[e^{ikx} a(k) + e^{-ikx}a^{*}(k)]\,\frac{dk}{\abs{k}}
\end{equation}
for the Hamiltonian of the system~\cite{FPZ1950}. Here,
$\Delta$ denotes the Laplace operator in $L^2(\R^3)$, $N$ is the number
operator in the symmetric Fock space over $L^2(\R^3)$, and $\alpha$ is a
coupling constant. The third term of \eqref{def-H}, which accounts for
the electron-phonon interaction, is not an operator in the Hilbert
space because the form factor is not square integrable.  Therefore,
expression \eqref{def-H}, as it stands, is not a densely defined
operator and hence cannot readily be adopted as the Hamiltonian of the
system. Expression \eqref{def-H} does, however, define a closed,
semi-bounded quadratic form with domain $D(H_0^{1/2})$, where $H_0=-\Delta+N$. Indeed, by a
simple argument of Lieb and Thomas, the interaction is
infinitesimally form bounded with respect to $H_0$~\cite{LiebYamazaki1958,LiebThomas1997}. There is
therefore a unique self-adjoint operator $H$, the
Fr\"ohlich Hamiltonian, associated with the quadratic form
defined by~\eqref{def-H}. If $H_{\cutoff}$, for $\cutoff>0$, is defined in
terms of \eqref{def-H} with ultraviolet cutoff
$\abs{k}\leq \cutoff$ in the interaction, then it follows, by general arguments, that $H_{\cutoff}\rightarrow H$ in the norm resolvent sense as $\cutoff\rightarrow \infty$.

The main purpose of this paper is to describe the domain $D(H)$ of
$H$ as explicitly as possible.  To this end, we follow Nelson and determine
$UHU^{*}$, where $U$ is a dressing transform given by Gross \cite{Nelson1964}.  Using a
recently published new variant of the Lieb-Thomas bound, we are able to show that $UHU^{*}$ is self-adjoint on
$D(H_0)$ and hence that
\begin{equation}\label{domain-H} 
     D(H) = U^* D(H_0).
\end{equation}
This result allows us to determine a core of $H$ in terms of coherent
states and to describe the action of $H$ on this core
explicitly. Moreover, we show that 
\begin{equation}\label{D-supset} 
     D(H) \subset \Big(\bigcap_{0<s<3/4} D\big((-\Delta)^{s}\big)\Big)\cap D(N),
\end{equation}
and that 
\begin{equation}\label{D-cap} 
     D(H) \cap D\left((-\Delta)^{3/4}\right) = \{0\}.
\end{equation}
The identity \eqref{D-cap} implies in particular that $D(H) \cap D(-\Delta) = \{0\}$, which
has the following simple explanation: when $H$ is applied to a
vector $\Psi \in D(H)\backslash\{0\}$, then the interaction part in
\eqref{def-H}, we call it $\sqrt{\alpha}W$, creates a vector
$\sqrt{\alpha}W\Psi$ outside of the Hilbert space. In fact,
$\sqrt{\alpha}W\Psi$ belongs to the dual of $D(H_0^{1/2})$ equipped
with the form norm of $H_0$. This vector must be canceled by some part of $H_0\Psi$ that is not in the
Hilbert space either. This means that $\Psi\not\in D(H_0)$ and, since
$\Psi\in D(N)$, by \eqref{D-supset}, we conclude that $\Psi\not\in
D(-\Delta)$. The mechanism of this cancellation of non-Hilbert space
parts is illustrated in the appendix by a formal computation of \eqref{def-H}
applied to vectors $\Psi$ from a core of $H$ where we know the action
of $H$ explicitly. Of course, these remarks equally apply to other Hamiltonians
describing quantum particles interacting with a quantized field of
bosons. Indeed, we prove \eqref{domain-H} and suitable generalizations
of \eqref{D-supset} and \eqref{D-cap} for a large class of form factors $v(k)$ including
$v(k)=|k|^{-(d-1)/2}$, $k\in \R^d$, describing the polaron in $d=2$ and $d=3$
space dimensions, respectively. In this more general case, the
admissible exponents in \eqref{D-supset} and \eqref{D-cap}
are determined by the rate of decay of the form factor as
$|k|\to\infty$. Our results could be further generalized to include  
$N$-polaron systems or external magnetic fields, but we refrain from such generalizations in order to keep the
paper short and the notation simple.

For the massive Nelson model where
$H_0=-\Delta+d\Gamma(\omega)$, $\omega(k)=\sqrt{k^2+m^2}$ and
$v(k)=\omega(k)^{-1/2}$, we expect results analogue to
\eqref{D-supset} and \eqref{D-cap}. In that case, however, the role of the number
operator $N$ is played by the field energy $d\Gamma(\omega)$. Its
domain is not left invariant by the Gross transform, which complicates
matters. We plan to return to this case in a future publication.

The UV renormalization of the Nelson and the Fr\"ohlich models in terms
of the Gross transform is well-understood and well-documented in the
literature~\cite{Nelson1964,Froehlich1974,Loewen1988,AmmariZied2000}. The
much more direct and straightforward characterizations of $H$ based on
the Lieb-Thomas argument have not yet been properly described in the literature, and we
therefore elaborate on them in Section \ref{sec:FroehlichHamiltonian}. Our main objectives are, however, characterization \eqref{domain-H} of the domain,
see Section~\ref{sec:GrossTransformation}, and the proofs of \eqref{D-supset} and
\eqref{D-cap} in Section~\ref{sec:regularity}. In the appendices, we
prove an abstract result on resolvent convergence based on form
bounds, Appendix~\ref{app:abstract}, we collect background on annihilation and creation
operators, Appendix~\ref{app:creation}, and we describe the action of $H$ on vectors
from a suitable core of $H$, Appendix~\ref{app:formal}.

\section{The construction of the Fr{\"o}hlich Hamiltonian} 
\label{sec:FroehlichHamiltonian}

In this section, we describe the class of Hamiltonians whose domains
will be studied in Sections~\ref{sec:GrossTransformation} and~\ref{sec:regularity}. These
Hamiltonians describe a quantum particle (called electron) in $\R^d$ that is coupled linearly to a quantized field of scalar bosons (called phonons). We begin with notations and hypotheses on the form factors. 

Let $\hilbert:=L^2(\R^d,dx)\otimes\fock$, where $\fock$ denotes the symmetric Fock space over $L^2(\R^d,dk)$ with arbitrary $d\in\N$. We may identify $\hilbert$ with $L^2(\R^d,\fock)$ through the isomorphism given by $\varphi\otimes\eta\mapsto\varphi(x)\eta$. Let $H_0:=-\Delta+N$, where $\Delta$ is the (self-adjoint) Laplacian in $L^2(\R^d)$ and $N$ denotes the number operator in $\fock$. Let
$$
   \norm{\Psi}_0:=\norm{(H_0+1)^{1/2}\Psi}
$$
for $\Psi\in D(H_0^{1/2})$. The Hamiltonian $H_0$ is self-adjoint on $D(H_0)=D(-\Delta\otimes\ID)\cap D(\ID\otimes N)$ and essentially self-adjoint on $D(H_0)\cap\finitepart$, where 
$$
   \finitepart:=\bigcup\limits_{n\geq 0}\chi(N\leq n)\hilbert.
$$

The electron-phonon interaction occurs in terms of annihilation and
creation of phonons. The usual annihilation and creation operators in Fock space associated
with some vector $f\in L^2(\R^d)$ will be denoted by $a(f)$ and
$a^{*}(f)$, respectively. They are closed, adjoint to each
other with $D(a(f)) = D(a^{*}(f)) \supset D(\sqrt{N})$, and they obey
the canonical commutation relations $[a(f),a^{*}(g)] =
\sprod{f}{g}$ on $D(N)$. The symmetric field operators 
$$
     \phi(f) := a(f) + a^{*}(f),\qquad \pi(f) := \phi(if)
$$
are essentially self-adjoint on $D(N)$, and they obey the commutation
relations 
\begin{align*}
     [\phi(f) ,\phi(g)] &= 2i\Ima \sprod{f}{g},\\
      [\phi(f) ,\pi(g)] &= 2i\Rea \sprod{f}{g}.
\end{align*}
The (self-adjoint) closures of the operators $\phi(f)$ and $\pi(f)$ will be denoted by the same symbols.

We will have occasion to work with generalized annihilation and
creation operators $a(F)$ and $a^{*}(F)$ that are operators in
$\hilbert$ rather than $\fock$. Here
$F:L^2(\R^d,dx)\to L^2(\R^d,dx)\otimes L^2(\R^d,dk)$ is a linear operator. In the simplest
case, $F\varphi=\varphi\otimes f$ for some $f\in L^2(\R^d,dk)$ and then
$a^{\#}(F)=\ID\otimes a^{\#}(f)$ is the usual annihilation or creation operator in
$\fock$. Often, but not always, the operator $F$ will be defined in terms
of some function $(x,k)\mapsto F_x(k)$, denoted by $F$ as well,
through the equation $(F\varphi)(x,k)=\varphi(x)F_x(k)$. In this case, $(a^{\#}(F)\Psi)(x)
=a^{\#}(F_x)\Psi(x)$. Typically, $F_x(k)=e^{-ikx}f(k)$, where $f \in
L^2(\R^d)$ and then the operator norm of $F$ equals the norm
of $f$ in $L^2(\R^d)$. See Appendix~\ref{app:creation} for the definition
of $a^{\#}(F)$ in the general case.

For $\cutoff<\infty$, we define $H_{\cutoff}:D(H_0)\subset\hilbert\rightarrow\hilbert$ by
$$
   H_{\cutoff}:=H_0+\phi(G_{\cutoff}),
$$
where
$$
   G_{\cutoff,x}(k):=e^{-ikx}\f(k)\chi_{\cutoff}(k).
$$
Here, $\chi_{\cutoff}$ denotes the characteristic function of the set
$\{k\in\R^d|\ \abs{k}\leq\cutoff\}$. On the form factor $\f:\R^d\rightarrow\C$, we impose the following assumptions:
\begin{align*}
   (\f 1)\ \ \quad &\f\in L_{loc}^2(\R^d)\quad\text{and}\quad\f(k)=\f(-k), \\
   (\f 2)\ \ \quad &\int\frac{\abs{\f(k)}^2}{1+k^2}\,dk<\infty.
\end{align*}
These assumptions are sufficient for the results of the present
section. Later, we will replace $(\f 2)$ by the slightly stronger assumption
\begin{align*}
   (\f 3)\ \ \quad \sup\limits_{q\in\R^d}\int\limits_{\abs{k}\geq
     \ir}\frac{\abs{\f(k)}^2}{1+(q-k)^2}dk \longrightarrow 0\qquad
   (\ir \to\infty). 
\end{align*}
An example of a form-factor $\f$ satisfying these conditions is the function
\begin{align}
\label{v-example}
   \f(k)=\abs{k}^{-(d-1)/2},\quad d\geq 2, 
\end{align}
which includes the form factors $\f(k)=\abs{k}^{-1/2}$ and $\f(k)=\abs{k}^{-1}$ of the large polaron models in $d=2$ and $d=3$ dimensions, respectively. Use H{\"o}lder's inequality with  exponents $(d+1)/(d-1)$ and $(d+1)/2$ to see that~\eqref{v-example} satisfies $(\f 3)$. 

From Lemma~\ref{lm:a-like-root-N}, it follows that
$\phi(G_{\cutoff})(N+1)^{-1/2}$ is bounded, and hence, $\phi(G_{\cutoff})$ is infinitesimally $H_0$-bounded. We conclude, by Kato-Rellich, that $H_{\cutoff}$ is self-adjoint on $D(H_0)$ and, moreover, that the quadratic form
\begin{align}
\label{equ:quadraticForm}
   (\Phi,\Psi) \mapsto \sprod{\Phi}{\phi(G_{\cutoff})\Psi}
\end{align}
defined on $D(H_0^{1/2})$ satisfies the Hypothesis~(a) of Theorem~\ref{thm:Ammari}. Let
\begin{equation*}
       D_{\ir}:=\int_{\abs{k}\geq\ir} \frac{1}{k^2} \abs{\f(k)}^2 dk.
\end{equation*}
Then, $D_{\ir}\rightarrow 0$ as $\ir\rightarrow\infty$ by Assumption~$(\f 2)$. Therefore, the following lemma establishes the Hypothesis~(b) of Theorem~\ref{thm:Ammari}:

\begin{lemma}
\label{lm:dW}
Assume $(\f 1)$ and $(\f 2)$. Then, for all $\cutoff_1,\cutoff_2\in\R_+$ and all $\Phi,\Psi\in D(H_0^{1/2})$, we have 
\begin{align*}
  \abs{\sprod{\Phi}{\phi(G_{\cutoff_1})\Psi}-\sprod{\Phi}{\phi(G_{\cutoff_2})\Psi}} \leq\abs{D_{\cutoff_1} -  D_{\cutoff_2}}^{1/2}\norm{\Phi}_0 \norm{\Psi}_0.
\end{align*}
\end{lemma}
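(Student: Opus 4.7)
The plan is to exploit the standard Lieb--Yamazaki commutator trick, which converts one power of $|k|$ in the form factor into one power of the electron momentum $p_j:=-i\partial_{x_j}$, thereby producing the $1/k^2$ weight that defines $D_{\ir}$.

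Without loss of generality I would assume $\cutoff_1<\cutoff_2$ and set $\chi_A:=\chi_{\cutoff_2}-\chi_{\cutoff_1}$, the indicator of the shell $A:=\{k\in\R^d:\cutoff_1\leq|k|\leq\cutoff_2\}$. With $F_x(k):=e^{-ikx}\f(k)\chi_A(k)$, one has $\phi(G_{\cutoff_1})-\phi(G_{\cutoff_2})=-\phi(F)$ and a direct computation gives $|D_{\cutoff_1}-D_{\cutoff_2}|=\int_A|\f(k)|^2/k^2\,dk$. The whole problem therefore reduces to establishing
\begin{equation*}
    |\sprod{\Phi}{\phi(F)\Psi}|\leq\Big(\int_A\frac{|\f(k)|^2}{k^2}\,dk\Big)^{1/2}\norm{\Phi}_0\norm{\Psi}_0.
\end{equation*}

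To this end I would introduce the auxiliary functions $h_j(k):=k_j\f(k)\chi_A(k)/k^2$ for $j=1,\ldots,d$, note the decomposition $\f(k)\chi_A(k)=\sum_jk_jh_j(k)$, and observe the essential identity
\begin{equation*}
    \sum_{j=1}^d\norm{h_j}_{L^2}^2=\int_A\frac{|\f(k)|^2}{k^2}\,dk=|D_{\cutoff_1}-D_{\cutoff_2}|.
\end{equation*}
Setting $G^j_x(k):=e^{-ikx}h_j(k)$, the pointwise identity $k_j\,e^{-ikx}=i\partial_{x_j}e^{-ikx}$ combined with integration by parts in $x$ should produce, initially on a dense subspace of $D(H_0)\cap\finitepart$ consisting of vectors that are smooth and of rapid decay in $x$, the Lieb--Yamazaki commutator identities
\begin{equation*}
    a^{*}(F)=\sum_{j=1}^d[a^{*}(G^j),p_j],\qquad a(F)=\sum_{j=1}^d[p_j,a(G^j)].
\end{equation*}

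Inserting these into $\sprod{\Phi}{\phi(F)\Psi}=\sprod{\Phi}{a(F)\Psi}+\sprod{\Phi}{a^{*}(F)\Psi}$ and shifting each $p_j$ onto the other factor of the inner product will yield a finite sum of expressions of the form $\sprod{p_j\Phi}{a^{\#}(G^j)\Psi}$ and $\sprod{a^{\#}(G^j)\Phi}{p_j\Psi}$. I would estimate each such term by Cauchy--Schwarz together with the standard bounds $\norm{a(G^j)\Xi}\leq\norm{h_j}\norm{N^{1/2}\Xi}$ and $\norm{a^{*}(G^j)\Xi}\leq\norm{h_j}\norm{(N+1)^{1/2}\Xi}$ from Lemma~\ref{lm:a-like-root-N}, and then apply a further Cauchy--Schwarz in the index $j\in\{1,\ldots,d\}$ to pull out the factor $(\sum_j\norm{h_j}^2)^{1/2}=|D_{\cutoff_1}-D_{\cutoff_2}|^{1/2}$. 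The remaining products of electron and field norms collapse into $\norm{\Phi}_0\norm{\Psi}_0$ via the decomposition $\norm{\Xi}_0^2=\norm{p\Xi}^2+\norm{N^{1/2}\Xi}^2+\norm{\Xi}^2$, and the inequality extends from the dense core to all $\Phi,\Psi\in D(H_0^{1/2})$ by continuity.

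The main obstacle will be the rigorous derivation of the commutator identities: the integration by parts in $x$ and the interchange of the $k$-integral with $\partial_{x_j}$ have to be justified on a suitable dense core, after which the identities are extended to the form domain using precisely the bounds that the argument itself produces. The $1/k^2$ weight built into $h_j$ is what keeps all intermediate objects square integrable and guarantees that no hypothesis stronger than $(\f 1)$ and $(\f 2)$ is required.
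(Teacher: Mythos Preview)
Your proposal is correct and follows essentially the same route as the paper: both arguments introduce the auxiliary vector-valued function $A_{x,j}(k)=k_j|k|^{-2}\big(G_{\cutoff_1,x}(k)-G_{\cutoff_2,x}(k)\big)$ (your $G^j$), use the Lieb--Yamazaki commutator identity $a(G_{\cutoff_1}-G_{\cutoff_2})=\sum_j[p_j,a(A_j)]$, move $p_j$ onto the other slot of the inner product, and finish with Lemma~\ref{lm:a-like-root-N} together with $\sum_j\|A_j\|^2=|D_{\cutoff_1}-D_{\cutoff_2}|$. The only cosmetic difference is that the paper first reduces to the diagonal case $\Phi=\Psi$ via the symmetry of $\phi(G_\cutoff)$, which slightly streamlines the final AM--GM step $2\|p\Psi\|\|\sqrt{N+1}\,\Psi\|\leq\|\Psi\|_0^2$, whereas you keep $\Phi\neq\Psi$ throughout and close with a Cauchy--Schwarz in the index $j$.
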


\begin{proof}
Since $\phi(G_\cutoff)$ is symmetric, it suffices to
establish the desired bound for the case $\Phi=\Psi$. To this end, fix
$\Lambda_1,\Lambda_2>0$ and let 
\begin{align*}
   A_{x}(k) := \frac{k}{\abs{k}^2}  (G_{\cutoff_1,x}(k) - G_{\cutoff_2,x}(k))
\end{align*}
with components $ A_{x,\ell}(k)$, $\ell=1\ldots d$. Then,
$i\nabla_x \cdot A_{x}= G_{\cutoff_1,x}-G_{\cutoff_2,x}$, and hence, 
\begin{align*}
    [p, a(A)] := \sum_{\ell=1}^d
    [p_\ell,a(A_{\ell})]= a(G_{\cutoff_1} - G_{\cutoff_2}),
\end{align*}
where $p=-i\nabla$. It follows that
\begin{eqnarray*}
    \abs{\sprod{\Psi}{\phi(G_{\cutoff_1})\Psi}-\sprod{\Psi}{\phi(G_{\cutoff_2})\Psi}}
    &=&\abs{2\Rea\sprod{\Psi}{a(G_{\cutoff_1}-G_{\cutoff_2})\Psi}}\\
                                                 &=&\abs{2\Rea\sprod{\Psi}{[p,a(A)]\Psi}}\\
                                                &\leq &2\Big( \abs{\sprod{p\Psi}{a(A)\Psi}}\ + \abs{\sprod{a^{*}(A)\Psi}{p\Psi}}\Big)\\
                                                &\leq &4\norm{p\Psi} \norm{\sqrt{N+1}\Psi} \norm{A}\\
                                                &\leq &2\norm{\Psi}_0^2\norm{A},
\end{eqnarray*}
where $\norm{A}^2 = \abs{D_{\cutoff_1} -  D_{\cutoff_2}}$.
\end{proof}
We are now ready to prove the main result of this section.

\begin{theorem}
\label{thm:nrs}
Assume $(\f 1)$ and~$(\f 2)$. Then, the following statements hold true:
\begin{itemize}
\item[(i)]   The limit
  $W_{\infty}(\Phi,\Psi):=\limop{\cutoff}\sprod{\Phi}{\phi(G_{\cutoff})\Psi}$
  exists for all $\Phi,\Psi\in D(H_0^{1/2})$.
\item[(ii)]  The quadratic form on $D(H_0^{1/2})$ given by $\sprod{H_0^{1/2}\Phi}{H_0^{1/2}\Psi}+W_{\infty}(\Phi,\Psi)$ is closed and bounded from below.
\item[(iii)] If $H$ denotes the (unique) self-adjoint operator associated with the quadratic form from~(ii), then $H_\cutoff\rightarrow H$ in the norm-resolvent sense as $\cutoff\rightarrow\infty$.
\end{itemize}
\end{theorem}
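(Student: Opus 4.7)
My strategy is to deduce all three assertions from the abstract norm-resolvent convergence result Theorem~\ref{thm:Ammari} of Appendix~\ref{app:abstract}. Its Hypothesis~(a), uniform infinitesimal form-boundedness of the approximating perturbations $\phi(G_{\cutoff})$ relative to $H_0$, has been recorded in the paragraphs preceding the statement; Hypothesis~(b), a quantitative Cauchy-type control of these sesquilinear forms with $\norm{\cdot}_0$ as test norm, is precisely the content of Lemma~\ref{lm:dW} once it is coupled with the fact that $D_{\cutoff}\to 0$ as $\cutoff\to\infty$, which follows from Assumption~$(\f 2)$.

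For part~(i), I would apply Lemma~\ref{lm:dW} with $\Phi=\Psi$ to see that, for each fixed $\Psi\in D(H_0^{1/2})$, the net $\cutoff\mapsto \sprod{\Psi}{\phi(G_{\cutoff})\Psi}$ is Cauchy as $\cutoff\to\infty$, hence convergent. Polarization then yields convergence for arbitrary $\Phi,\Psi\in D(H_0^{1/2})$ and defines the sesquilinear form $W_\infty$.

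For part~(ii), I would send $\cutoff_2\to\infty$ in Lemma~\ref{lm:dW} with $\Phi=\Psi$ to obtain
\[
   |W_\infty(\Psi,\Psi)-\sprod{\Psi}{\phi(G_{\cutoff})\Psi}| \le D_{\cutoff}^{1/2}\norm{\Psi}_0^2.
\]
For any fixed $\cutoff$, $\phi(G_{\cutoff})$ is infinitesimally $H_0$-bounded and hence infinitesimally form-bounded. Given $\eps>0$, I would choose $\cutoff$ large enough that $D_{\cutoff}^{1/2}<\eps$; combining this with the $\eps$--$b$ form bound for the fixed $\phi(G_{\cutoff})$ produces $|W_\infty(\Psi,\Psi)|\le(a+\eps)\sprod{\Psi}{H_0\Psi}+b\norm{\Psi}^2$ with $a$ arbitrarily small. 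Thus $W_\infty$ is infinitesimally form-bounded relative to $H_0$, and the KLMN theorem delivers the closedness and the lower bound.

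For part~(iii), the norm-resolvent convergence $H_{\cutoff}\to H$, where $H$ is the self-adjoint operator produced by the closed form of~(ii), is the conclusion of Theorem~\ref{thm:Ammari} under Hypotheses~(a) and~(b). The only delicate point in a self-contained argument would be the transfer of the relative form-bound from the cut-off perturbations to the limit form $W_\infty$ carried out in~(ii); this is where the quantitative estimate of Lemma~\ref{lm:dW}, obtained via the Lieb--Thomas integration-by-parts trick through the auxiliary vector field $A$ with $i\nabla_x\!\cdot\! A=G_{\cutoff_1}-G_{\cutoff_2}$, is essential. Once the limit form is under control, the resolvent convergence itself is standard (a second-resolvent-type computation on the form level).
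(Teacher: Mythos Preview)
Your proposal is correct and follows essentially the same route as the paper: both verify Hypotheses~(a) and~(b) of Theorem~\ref{thm:Ammari} via the discussion preceding the statement and Lemma~\ref{lm:dW}, and then read off (i)--(iii) from that abstract result. Your additional explanations in parts~(i) and~(ii) merely unpack steps that are already contained in the proof of Theorem~\ref{thm:Ammari} itself, so no genuinely different idea is involved.
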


\begin{proof}
We apply Theorem~\ref{thm:Ammari} to the Hamiltonian $H_0$ and the quadratic form defined by~\eqref{equ:quadraticForm}. We already pointed out that Hypothesis~(a) of Theorem~\ref{thm:Ammari} is satisfied and Hypothesis~(b) follows from Lemma~\ref{lm:dW}. Now the statements (i), (ii), and (iii) follow from Theorem~\ref{thm:Ammari}.
\end{proof}

The convergence $H_\cutoff\rightarrow H$ in the norm-resolvent sense implies convergence in the strong resolvent sense, which is equivalent to 
\begin{equation}
\label{eq:dyn}
      e^{-iH_\cutoff t}\Psi \longrightarrow e^{-iH t}\Psi\qquad (\cutoff\to\infty)
\end{equation}
for all $t\in \R$ and all $\Psi\in \HH$. Alternatively, the existence of limit
\eqref{eq:dyn} can be derived directly from Lemma~\ref{lm:dW} and its
Corollary~\ref{lm:W-is-small}, below. Hence, with the help of Stone's theorem, 
a further and very straightforward characterization of $H$ as the generator of
unitary group \eqref{eq:dyn} is achieved. This is the content of the
Theorem~\ref{thm:U-exists} and its proof. 

\begin{corollary}\label{lm:W-is-small}
Assume $(\f 1)$ and $(\f 2)$. Then, for every $\eps>0$, there exists $C_\eps$ such that for all $\cutoff>0$,
\begin{itemize}
\item[(a)] $\pm \phi(G_{\cutoff})\leq \eps H_0 +C_{\eps}$,
\item[(b)] $(1-\eps) H_0 - C_\eps \leq H_{\cutoff} \leq  (1+\eps)H_0 +
  C_\eps$.
\end{itemize}
\end{corollary}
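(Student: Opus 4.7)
The plan is to establish (a) by splitting the form factor at some fixed but arbitrary momentum scale $K$ into a soft and a hard part, treating each by a different bound uniform in $\cutoff$, and then to deduce (b) from (a) by simply adding $H_0$. Concretely, I would write $G_{\cutoff,x}(k) = G^{\leq K}_{\cutoff,x}(k) + G^{>K}_{\cutoff,x}(k)$, where $G^{\leq K}_{\cutoff,x}(k) := e^{-ikx}\f(k)\chi_{\{|k|\leq K\}}(k)$ (in fact independent of $\cutoff$ once $\cutoff \geq K$) and $G^{>K}_{\cutoff,x}(k) := e^{-ikx}\f(k)\chi_{\{K < |k| \leq \cutoff\}}(k)$. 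Note that $G^{\leq K}_{\cutoff}$ is square-integrable in $k$ with norm $\|\f\chi_{\{|k|\leq K\}}\|_2$ finite by $(\f 1)$, whereas the tail part is controlled by $(\f 2)$.

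For the soft part I would apply the standard bound $\|a^{\#}(f)\Psi\| \leq \|f\|\,\|\sqrt{N+1}\Psi\|$ collected in Appendix~\ref{app:creation}, giving
\[
  |\sprod{\Psi}{\phi(G^{\leq K}_{\cutoff})\Psi}| \leq 2\|\f\chi_{\{|k|\leq K\}}\|_2\,\|\sqrt{N+1}\Psi\|\,\|\Psi\| \leq \delta\sprod{\Psi}{(N+1)\Psi} + \frac{1}{\delta}\|\f\chi_{\{|k|\leq K\}}\|_2^2\,\|\Psi\|^2
\]
for any $\delta>0$, by the elementary inequality $2ab\leq\delta a^2+\delta^{-1}b^2$. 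For the hard part I would repeat verbatim the computation of Lemma~\ref{lm:dW}, this time with $A_x(k) := \frac{k}{|k|^2} G^{>K}_{\cutoff,x}(k)$, so that $i\nabla_x\cdot A_x = G^{>K}_{\cutoff,x}$ pointwise, hence $[p,a(A)] = a(G^{>K}_{\cutoff})$. The same chain of inequalities yields
\[
  |\sprod{\Psi}{\phi(G^{>K}_{\cutoff})\Psi}| \leq 2\|\Psi\|_0^2\,\|A\|, \qquad \|A\|^2 = \int_{K<|k|\leq\cutoff}\frac{|\f(k)|^2}{|k|^2}\,dk \leq D_{\ir},
\]
uniformly in $\cutoff$.

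Given $\eps>0$, assumption $(\f 2)$ lets me fix $K$ once and for all so large that $2D_{\ir}^{1/2}\leq \eps/2$, and then set $\delta = \eps/2$. Summing the two contributions and using $N\leq H_0$ gives $\pm\phi(G_\cutoff)\leq \eps H_0 + C_\eps$ on $D(H_0^{1/2})$, which is (a); adding $H_0$ to this form inequality yields (b) immediately. I do not anticipate any genuine obstacle: the hard-part estimate is precisely the commutator trick of Lemma~\ref{lm:dW}, and the soft-part estimate is a one-line $L^2$-bound. The only point requiring a little care is to make the constant in the hard-part bound independent of $\cutoff$, but this is exactly what the replacement of the difference $G_{\cutoff_1}-G_{\cutoff_2}$ by $G^{>K}_{\cutoff}$ achieves, with the tail integral $D_{\ir}$ as the controlling small parameter.
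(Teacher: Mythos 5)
Your proposal is correct and is essentially the paper's own argument: the paper fixes a large $\cutoff_0$, uses the infinitesimal $H_0$-bound for $\phi(G_{\cutoff_0})$ (your soft part), and controls $\phi(G_{\cutoff})-\phi(G_{\cutoff_0})$ via Lemma~\ref{lm:dW}, whose proof with $\cutoff_1=\cutoff$, $\cutoff_2=K$ is exactly your hard-part commutator estimate. You have merely unpacked that lemma instead of citing it; the deduction of (b) by adding $H_0$ also matches the paper.
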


\begin{proof}
(b) follows immediately from (a). To prove (a), note that the asserted
inequality is true for any fixed $\cutoff=\cutoff_0$. Then, choose
$\cutoff_0$ sufficiently large and use Lemma~\ref{lm:dW}.
\end{proof}

\begin{theorem}
\label{thm:U-exists}
Assume $(\f 1)$ and $(\f 2)$. Then, for all $t\in\R$ and $\Psi\in \HH$, the limit 
\begin{align*}
    U(t)\Psi := \limop{\cutoff} e^{-iH_\cutoff t}\Psi
\end{align*}
exists and defines a strongly continuous unitary group $U(t)$.
\end{theorem}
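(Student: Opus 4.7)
The plan is to derive existence of the limit directly from Lemma~\ref{lm:dW} and Corollary~\ref{lm:W-is-small}(b), using Duhamel's formula together with a uniform-in-$\cutoff$ propagation estimate, rather than quoting Theorem~\ref{thm:nrs}. For $\Psi\in D(H_0)=D(H_{\cutoff_1})\cap D(H_{\cutoff_2})$, differentiating the two-parameter family $s\mapsto e^{-iH_{\cutoff_1}(t-s)}e^{-iH_{\cutoff_2}s}\Psi$ and integrating over $[0,t]$ yields
\begin{equation*}
  (e^{-iH_{\cutoff_2}t}-e^{-iH_{\cutoff_1}t})\Psi
   = -i\int_0^t e^{-iH_{\cutoff_1}(t-s)}\,\phi(G_{\cutoff_2}-G_{\cutoff_1})\,e^{-iH_{\cutoff_2}s}\Psi\,ds.
\end{equation*}

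Setting $\Phi_t:=(e^{-iH_{\cutoff_2}t}-e^{-iH_{\cutoff_1}t})\Psi$, pairing this identity with $\Phi_t$, and applying Lemma~\ref{lm:dW} to the integrand (with difference form factor $G_{\cutoff_2}-G_{\cutoff_1}$) gives
\begin{equation*}
  \norm{\Phi_t}^2
   \leq |D_{\cutoff_1}-D_{\cutoff_2}|^{1/2}\int_0^{|t|}\norm{e^{iH_{\cutoff_1}(t-s)}\Phi_t}_0\,\norm{e^{-iH_{\cutoff_2}s}\Psi}_0\,ds.
\end{equation*}
The decisive ingredient is a uniform propagation estimate $\norm{e^{\pm iH_\cutoff s}\Xi}_0\leq c\,\norm{\Xi}_0$, with $c$ independent of $\cutoff$ and $s$, for all $\Xi\in D(H_0^{1/2})$. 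This follows from Corollary~\ref{lm:W-is-small}(b) with $\eps=1/2$: the form inequality $H_0+1\leq 2H_\cutoff+2C_{1/2}+1$ is preserved under conjugation by the unitary group of $H_\cutoff$, because $H_\cutoff$ itself commutes with $e^{\pm iH_\cutoff s}$ at the level of its quadratic form. Substituting the propagation bounds applied to $\Phi_t$ and $\Psi$ into the previous inequality produces
\begin{equation*}
  \norm{(e^{-iH_{\cutoff_2}t}-e^{-iH_{\cutoff_1}t})\Psi}
   \leq C\,|t|^{1/2}\,|D_{\cutoff_1}-D_{\cutoff_2}|^{1/4}\,\norm{\Psi}_0,
\end{equation*}
which, since $D_\cutoff\to 0$ by Assumption $(\f 2)$, is Cauchy uniformly for $t$ in compact intervals.

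Strong convergence on $D(H_0)$ then extends to all of $\HH$ by a three-epsilon argument using $\norm{e^{-iH_\cutoff t}}=1$, and this defines $U(t)$. Each $U(t)$ is isometric as a strong limit of unitaries; running the same argument with the time reversed shows that $U(t)^{*}=\lim_\cutoff e^{iH_\cutoff t}$ also exists and is isometric, so taking strong limits in $e^{-iH_\cutoff t}e^{iH_\cutoff t}=I$ gives unitarity of $U(t)$. The group property $U(s+t)=U(s)U(t)$ follows by a telescoping of $e^{-iH_\cutoff(s+t)}-e^{-iH_\cutoff s}U(t)-(e^{-iH_\cutoff s}-U(s))U(t)$ and passing to the strong limit, using $\norm{e^{-iH_\cutoff s}}=1$. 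Strong continuity is inherited from the approximants: the above estimate is uniform for $t$ in compact intervals, so $U(\cdot)\Psi$ is the uniform-on-compacts limit of the continuous functions $t\mapsto e^{-iH_\cutoff t}\Psi$ for $\Psi\in D(H_0)$, and extension to general $\Psi\in\HH$ uses density together with $\norm{U(t)}=1$.

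The principal obstacle is the uniformity in $\cutoff$ of the propagation bound $\norm{e^{\pm iH_\cutoff s}\Xi}_0\leq c\,\norm{\Xi}_0$: without it, Lemma~\ref{lm:dW} supplies only a form-level error that cannot be converted to a Hilbert-space Cauchy statement. It is precisely the two-sided form equivalence of $H_\cutoff$ and $H_0$ provided by Corollary~\ref{lm:W-is-small}(b), rather than a mere lower form bound, that controls $\norm{\cdot}_0$ along the dynamics uniformly in $\cutoff$.
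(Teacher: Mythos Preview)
Your proof is correct and follows essentially the same route as the paper: Duhamel's formula for the difference of propagators, Lemma~\ref{lm:dW} to bound the integrand in form, and the uniform-in-$\cutoff$ propagation estimate $\norm{e^{-iH_\cutoff s}\Xi}_0\leq c\norm{\Xi}_0$ from Corollary~\ref{lm:W-is-small}(b). The only cosmetic difference is that the paper differentiates $s\mapsto \sprod{\Psi}{U_{\cutoff_1}^{*}(s)U_{\cutoff_2}(s)\Psi}$, thereby keeping $\Psi$ in both slots of the form, whereas your pairing with $\Phi_t$ requires the extra (but immediate) observation $\norm{\Phi_t}_0\leq 2c\norm{\Psi}_0$ before the propagation bound can be applied.
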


\begin{proof}
Let $U_\cutoff(t) = \exp(-iH_{\cutoff} t)$. Then, for all $\Psi\in D(H_0)$,
\begin{eqnarray}
   \norm{U_{\cutoff_1}(t)\Psi - U_{\cutoff_2}(t)\Psi}^2&=& 2\norm{\Psi}^2 - 2\Rea\sprod{U_{\cutoff_1}(t)\Psi}{U_{\cutoff_2}(t)\Psi} \nonumber \\
                                                 &=& -2\Rea \sprod{\Psi}{(U_{\cutoff_1}^{*}(t)U_{\cutoff_2}(t)-1)\Psi} \nonumber \\
                                                 &=& -2\Rea i \int_0^t \sprod{\Psi}{U_{\cutoff_1}^{*}(s)(H_{\cutoff_1}-H_{\cutoff_2} ) U_{\cutoff_2}(s)\Psi}\, ds \nonumber \\
                 \label{eq:dU}               &\leq & C \abs{t} \ \abs{D_{\cutoff_2}-D_{\cutoff_1}}^{1/2} \norm{\Psi}_0^2.
\end{eqnarray}
In the last inequality, we used Lemma~\ref{lm:dW} and Corollary~\ref{lm:W-is-small} (b), which implies that 
\begin{align*}
      \norm{U_\cutoff(t)\Psi}_0 \leq C\norm{\Psi}_0
\end{align*}
with a constant $C$ that is independent of $\cutoff$ and $t$. The bound \eqref{eq:dU} implies that $U(t)\Psi$ exists for all  $\Psi\in D(H_0)$ and that $U_{\cutoff}(t)\Psi \rightarrow U(t)\Psi$ uniformly for $t$ from compact intervals. Hence, $t\mapsto U(t)\Psi$ is continuous for $\Psi\in D(H_0)$. Since $\norm{U_{\cutoff}(t)}=1$ and since $D(H_0)$ is dense, it follows that $U(t)$ exists on $\HH$, that $\norm{U(t)}=1$, and that $t\mapsto U(t)\Psi$ is continuous. The group properties 
\begin{align*}
    U(0)=1\quad \text{and}\quad U(t+s) = U(t)U(s)
\end{align*}
follow from the corresponding properties of $U_\cutoff(t)$. They imply that $U(-t) = U(t)^{-1}$ and hence that $U(t)$ is unitary.
\end{proof}

\section{The Gross transform and the domain of $H$} 
\label{sec:GrossTransformation}

In this section, we prove Equation~\eqref{domain-H} in the Introduction
in the more general form given in Theorem~\ref{thm:domain-H},
below. To this end, we first need to recall, from \cite{Nelson1964},
the dressing transform of Gross and its effect on $H_{\cutoff}$.

The Gross transform $U_{\cutoff}:\hilbert\to\hilbert$ is a unitary linear operator depending on
the parameters $\ir,\cutoff\geq 0$, where $\ir$ is fixed
most of the time and, therefore, often suppressed in our notation. For given $\ir,\cutoff$ with $0<\ir<\cutoff\leq \infty$, we define
\begin{equation*}
   U_{\cutoff}:=e^{i\pi(B_{\cutoff})},
\end{equation*}
where
\begin{equation}\label{def:B-cutoff}
   B_{\cutoff,x}(k) :=-\frac{1}{1+k^2}G_{\cutoff,x}(k)(1-\chi_{\ir}(k)).
\end{equation}
We will use $kB_\Lambda$ and $k^2B_\Lambda$ to denote the functions
$kB_{\Lambda,x}(k)$ and $k^2B_{\Lambda,x}(k)$, respectively.
Note that, by $(\f 2)$, $|B_{\cutoff,x}|\leq \sup_{x}|B_{\infty,x}|\in L^2(\R^d)$
and that $\sup_{x}\|B_{\cutoff,x} - B_{\infty,x}\|\to 0$ as
$\cutoff\rightarrow\infty$. It follows, by a generalization of Lemma~\ref{lm:strongWeyl}, that
\begin{equation*}
   U_{\cutoff}\longrightarrow U_{\infty}\quad (\cutoff\to\infty)
\end{equation*}
strongly in $\hilbert$. To compute $U_{\cutoff} H_{\cutoff} U_{\cutoff}^{*}$
we need the following lemma. From now on, $p$ and $p^2$ often denote $-i\nabla$ and $-\Delta$, respectively.

\begin{lemma}
\label{lm:dressing-parts}
Assume $(\f 1)$ and $(\f 2)$. Then:
\begin{itemize}
\item[(a)] $U_{\cutoff} D(H_0^{1/2}) = D(H_0^{1/2})$ for $\cutoff\leq\infty$ and
$$
U_{\cutoff} p U_{\cutoff}^{*} = p-\phi(kB_{\cutoff})\quad\text{on}\ D(H_0^{1/2}).
$$
\item[(b)] $U_{\cutoff} D(H_0) = D(H_0)$ for $\cutoff<\infty$ and
$$
U_{\cutoff} p^2 U_{\cutoff}^{*}=(p-\phi(kB_{\cutoff}))^2\quad\text{on}\ D(H_0).
$$
\item[(c)] $U_{\infty} D(H_0)$ is a form core of $H_0$.
\end{itemize}
\end{lemma}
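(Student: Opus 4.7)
The plan is to reduce all three parts to one explicit pointwise formula for the $x$-derivative of $U_\Lambda$. For each $x$, $U_\Lambda$ acts as the Weyl displacement $e^{i\pi(B_{\Lambda,x})}$ on $\FF$, so $x$-differentiation only sees the form factor. Since $\partial_{x_j}B_{\Lambda,x}=-ik_jB_{\Lambda,x}$, the representations $\pi(f)=ia^*(f)-ia(f)$ and $\phi(f)=a^*(f)+a(f)$ yield $\partial_{x_j}\pi(B_{\Lambda,x})=\phi(k_jB_{\Lambda,x})$. The crucial algebraic fact is that
\[
   [\pi(B_{\Lambda,x}),\phi(k_jB_{\Lambda,x})] \;=\; -2i\,\Rea\langle k_jB_{\Lambda,x},B_{\Lambda,x}\rangle
\]
\emph{vanishes}, because by $(\f 1)$ the density $|B_{\Lambda,x}(k)|^2$ is even in $k$ and $\int k_j|B_{\Lambda,x}(k)|^2\,dk=0$. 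Consequently the generator $A(x):=i\pi(B_{\Lambda,x})$ commutes with its own $x$-derivative, the ordinary rule $\partial_{x_j}e^{A(x)}=\partial_{x_j}A(x)\,e^{A(x)}$ applies, and one obtains, first on a dense core of smooth compactly supported vectors with finite particle number, the pointwise identity
\[
   \partial_{x_j}U_\Lambda \;=\; i\phi(k_jB_\Lambda)\,U_\Lambda.
\]

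For part (a), substituting $p_j=-i\partial_{x_j}$ gives $p_jU_\Lambda=\phi(k_jB_\Lambda)U_\Lambda+U_\Lambda p_j$ on the core, which rearranges to $U_\Lambda p_jU_\Lambda^*=p_j-\phi(k_jB_\Lambda)$. Assumption $(\f 2)$ makes $kB_\Lambda$ uniformly $L^2$ in $\Lambda\leq\infty$, so by Lemma~\ref{lm:a-like-root-N} the perturbation $\phi(k_jB_\Lambda)$ is $(N+1)^{1/2}$-bounded, hence $H_0^{1/2}$-bounded, and the identity extends by density to $D(H_0^{1/2})$. For the invariance $U_\Lambda D(H_0^{1/2})=D(H_0^{1/2})$, I would combine this with the Weyl-operator formula $U_\Lambda^*a(f)U_\Lambda=a(f)-\langle f,B_\Lambda\rangle$ (which gives $U_\Lambda^*NU_\Lambda=N-\phi(B_\Lambda)+\|B_{\Lambda,\cdot}\|^2$) to show $U_\Lambda^*H_0U_\Lambda\leq c(H_0+1)$ as a form on $D(H_0^{1/2})$; the analogous estimate in the other direction then forces equality of the form domains.

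For part (b), differentiating once more yields $\partial_{x_j}^2U_\Lambda=(-i\pi(k_j^2B_\Lambda)-\phi(k_jB_\Lambda)^2)U_\Lambda$; substituting into $p_j^2(U_\Lambda\Psi)=-\partial_{x_j}^2(U_\Lambda\Psi)$ and solving algebraically produces the identity $U_\Lambda p_j^2U_\Lambda^*=(p_j-\phi(k_jB_\Lambda))^2$. When $\Lambda<\infty$ the sharp cutoff puts $B_\Lambda,kB_\Lambda,k^2B_\Lambda$ all in $L^2(\R^d)$, so every factor on the right-hand side is well defined and maps $D(H_0)$ into $\hilbert$; this gives $U_\Lambda D(H_0)\subset D(H_0)$, and the symmetric estimate for $U_\Lambda^*$ yields $U_\Lambda D(H_0)=D(H_0)$. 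Part (c) is then immediate: by (a), $U_\infty$ and $U_\infty^*$ are bicontinuous bijections of $(D(H_0^{1/2}),\|\cdot\|_0)$ onto itself, and since $D(H_0)$ is a form core of $H_0$, its image $U_\infty D(H_0)$ is $\|\cdot\|_0$-dense in $D(H_0^{1/2})$, i.e., a form core.

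The main technical obstacle I anticipate is a rigorous operator-level justification of the basic identity $\partial_{x_j}U_\Lambda=i\phi(k_jB_\Lambda)U_\Lambda$, particularly for $\Lambda=\infty$ where the exponent $i\pi(B_{\infty,x})$ is unbounded and $x$-dependent and the interchange of $\partial_{x_j}$ with the exponential is not a priori legitimate. My remedy is to establish the identity first on the invariant dense subspace of smooth compactly supported $x$-functions with finite Fock particle number, where $\pi(B_{\Lambda,x})$ restricts to a polynomial in $a^{\#}$ with smooth $x$-dependence and the classical power-series argument for $\partial_x e^A=A'e^A$ applies verbatim, and then to extend the identity to $D(H_0^{1/2})$ using the $(N+1)^{1/2}$-bound on $\phi(k_jB_\Lambda)$ together with the $\|\cdot\|_0$-continuity of $U_\Lambda$ from part (a).
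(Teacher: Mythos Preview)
Your proposal is correct and follows essentially the same route as the paper: both establish the key identity $pU_\Lambda^*\Psi=U_\Lambda^*(p-\phi(kB_\Lambda))\Psi$ on the core $D(H_0)\cap\HH_0$ via the exponential series, using that $[\phi(kB_\Lambda),\pi(B_\Lambda)]=0$ by $(\f 1)$, and then extend by $H_0^{1/2}$-boundedness of $\phi(kB_\Lambda)$. The minor differences are cosmetic: the paper gets domain invariance in (a) by treating $D(|p|)$ and $D(\sqrt{N})$ separately (the latter via Lemma~\ref{lm:UNUphi}) rather than via a single form bound $U_\Lambda^*H_0U_\Lambda\leq c(H_0+1)$; it obtains (b) by iterating (a) once rather than by differentiating twice; and for (c) it invokes the closed graph theorem to deduce $\|\cdot\|_0$-boundedness of $U_\infty^*$, whereas your form bound from (a) gives this directly and is slightly more economical.
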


Since the components of $p$ are essentially self-adjoint on $D(H_0^{1/2})$, part (a) implies that $U_{\cutoff} p_j U_{\cutoff}^{*} = p_j-\phi(k_j B_{\cutoff})$ as an equality between self-adjoint operators on their respective domains.

\begin{proof} 
(a) Let $\DD= D(H_0)\cap\HH_0$. Then, $\DD$ is an operator core and hence a form core of $H_0$. Moreover, for $\Psi\in\DD$, one shows that 
\begin{equation}\label{eq:pU}
       p U_{\cutoff}^{*}\Psi = U_{\cutoff}^{*}(p-\phi(kB_{\cutoff}))\Psi
\end{equation}
by expanding $U_{\cutoff}^{*}$ in its exponential series. Here we used
that $[\phi(kB_{\cutoff}),\pi(B_{\cutoff})]=0$ by assumption $(\f 1)$ on $v$. Since $\DD$ is a form core of $H_0$ and since $(p-\phi(kB_{\cutoff}))$ is bounded w.r.t. $H_0^{1/2}$ Equation~\eqref{eq:pU} extends to all $\Psi\in D(H_0^{1/2})$ and we see that $U_{\cutoff} D(H_0^{1/2}) \subset D(|p|)$. Since $D(H_0^{1/2}) = D(|p|)\cap D(\sqrt{N})$ and since $D(\sqrt{N})$ is left invariant by $U_{\cutoff}^{*}$, see Lemma~\ref{lm:UNUphi}, we conclude that $U_{\cutoff}^{*}D(H_0^{1/2})\subset D(H_0^{1/2})$. Likewise, $U_{\cutoff} D(H_0^{1/2})\subset D(H_0^{1/2})$ by changing the sign of $v$ and part (a) is proved.

(b) Let $\Psi\in D(H_0)$. Then $U_{\cutoff}^{*}\Psi\in D(H_0^{1/2})$ by
part (a) and $pU_{\cutoff}^{*}\Psi$ is given by Equation~\eqref{eq:pU}. For $\cutoff<\infty$, 
$(p-\phi(kB_{\cutoff}))\Psi\in D(|p|)\cap D(\sqrt{N})
=D(H_0^{1/2})$. It follows, by part (a) again, that $U_{\cutoff}^{*}(p-\phi(kB_{\cutoff}))\Psi\in D(H_0^{1/2})$. Hence, in
view of Equation~\eqref{eq:pU}, $pU_{\cutoff}^{*}\Psi\in D(|p|)$ and
$p^2 U_{\cutoff}^{*}\Psi =
U_{\cutoff}^{*}(p-\phi(kB_{\cutoff}))^2\Psi$. Since
$U_{\cutoff}^{*}D(N)\subset D(N)$ by Lemma~\ref{lm:UNUphi}, part (b)
follows from $D(H_0)=D(p^2)\cap D(N)$.

(c) Let $\HH_1=D(H_0^{1/2})$ equipped with the
form norm of $H_0$. By part (a),  $U_\infty^{*}:\HH_1\to\HH_1$ and
this operator is closed which is easy to see from the continuity of $U_\infty^{*}$ in $\HH$.
Therefore, $U_\infty^{*}$ is bounded in $\HH_1$ by the
closed graph theorem. Since $D(H_0)$ is dense in $\HH_1$, it follows
that $U_\infty^{*}D(H_0)$ is dense in $\HH_1$ as well.
\end{proof}

The results from Lemma~\ref{lm:dressing-parts} (b), the identity
$$
   p\cdot a^*(kB_{\cutoff}) + a(kB_{\cutoff})\cdot p  = a^*(kB_{\cutoff})\cdot p+ p\cdot a(kB_{\cutoff}) -\phi(k^2B_{\cutoff}),
$$
and Lemma~\ref{lm:UNUphi} yield the operator identities 
\begin{eqnarray}
        U_{\cutoff}p^2 U_{\cutoff}^* &=& p^2-2a^*(kB_{\cutoff})\cdot p-2p\cdot a(kB_{\cutoff}) \notag\\
   \label{equ:GTrafoPSquared}   &\phantom{=}&
   +\ \phi(kB_{\cutoff})^2+\phi(k^2B_{\cutoff}) \\
   \label{equ:GTrafoN}        U_{\cutoff}N U_{\cutoff}^* &=& N+\phi(B_{\cutoff})+\norm{B_{\cutoff}}^2 \\
   \label{equ:GTrafoPhi} U_{\cutoff}\phi(G_{\cutoff}) U_{\cutoff}^* &=& \phi(G_{\cutoff})+2\Rea\sprod{B_{\cutoff}}{G_{\cutoff}}
\end{eqnarray}
on $D(H_0)$ for $\cutoff<\infty$. In the above equations, we introduced various
dot-products such as  $p\cdot a(kB_{\cutoff}) = \sum_{j=1}^d p_j a(k_jB_{\cutoff}) $. In view of the fact that
$(1+k^2)B_{\cutoff}=G_{\ir}-G_{\cutoff}$, by definition
\eqref{def:B-cutoff} of $B_{\cutoff}$, we arrive at:

\begin{prop}
\label{proposition:transformedFroehlichHamiltonian}
Assume $(\f 1)$ and $(\f 2)$. Then for all $\cutoff<\infty$, we have 
$U_{\cutoff}H_{\cutoff}U_{\cutoff}^* = H_{\ir} + V_{\ir,\cutoff}$ on $D(H_0)$, where
\begin{align*}
   V_{\ir,\cutoff} &:= -2a^*(kB_{\cutoff})\cdot p-2p\cdot
   a(kB_{\cutoff})+\phi(kB_{\cutoff})^2+C_{\ir,\cutoff},\\
   C_{\ir,\cutoff} &:=\norm{B_{\cutoff}}^2+2\sprod{G_{\cutoff}}{B_{\cutoff}}=\int\limits_{\ir\leq\abs{k}\leq\cutoff}
\abs{v(k)}^2\big( (1+k^2)^{-2} -2(1+k^2)^{-1}\big)\, dk.
\end{align*}
In particular, the operator $H_{\ir,\cutoff}':=H_{\ir}+V_{\ir,\cutoff}$ is
self-adjoint on $D(H_0)$.
\end{prop}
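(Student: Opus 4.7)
The plan is to add the three operator identities \eqref{equ:GTrafoPSquared}, \eqref{equ:GTrafoN}, and \eqref{equ:GTrafoPhi}, all of which hold as operator equalities on $D(H_0)$ for $\cutoff<\infty$. Summing them produces $U_{\cutoff} H_{\cutoff} U_{\cutoff}^{*}$ on the left, and on the right the three linear field terms $\phi(k^2 B_{\cutoff}) + \phi(B_{\cutoff}) + \phi(G_{\cutoff})$ group together. By linearity of $\phi$ they combine to $\phi((1+k^2)B_{\cutoff} + G_{\cutoff})$, and the key algebraic identity
$$
(1+k^2)B_{\cutoff} = G_{\ir} - G_{\cutoff},
$$
which is immediate from definition \eqref{def:B-cutoff} together with $\chi_{\cutoff}(1-\chi_{\ir}) = \chi_{\cutoff} - \chi_{\ir}$ for $\ir<\cutoff$, collapses this sum to $\phi(G_{\ir})$. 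Combined with the $p^2$ and $N$ pieces, the right-hand side reorganizes into $H_{\ir} + V_{\ir,\cutoff}$ as claimed, with constant contribution $C_{\ir,\cutoff} = \norm{B_{\cutoff}}^2 + 2\Rea\sprod{B_{\cutoff}}{G_{\cutoff}}$.

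Next I would extract the explicit integral expression for $C_{\ir,\cutoff}$. The point is that $B_{\cutoff,x}(k)$ and $G_{\cutoff,x}(k)$ carry exactly the same $x$-dependent phase $e^{-ikx}$, so the fiberwise inner product $\sprod{B_{\cutoff,x}}{G_{\cutoff,x}}$ and the norm $\norm{B_{\cutoff,x}}^2$ are $x$-independent constants (this is also the reason the operator-valued constants appearing in the three conjugation identities are genuinely scalar). A direct integration, using that $\chi_{\cutoff}(1-\chi_{\ir})$ is the indicator of $\{\ir \leq \abs{k} \leq \cutoff\}$ and that both inner products are real-valued, yields the stated formula for $C_{\ir,\cutoff}$.

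For the self-adjointness of $H_{\ir,\cutoff}' = H_{\ir} + V_{\ir,\cutoff}$, I rely on the fact, noted immediately before Lemma~\ref{lm:dW}, that $H_{\cutoff}$ is self-adjoint on $D(H_0)$ for every $\cutoff<\infty$ by Kato--Rellich. Since $U_{\cutoff}$ is unitary and $U_{\cutoff} D(H_0) = D(H_0)$ by Lemma~\ref{lm:dressing-parts}(b), the conjugate $U_{\cutoff} H_{\cutoff} U_{\cutoff}^{*}$ is self-adjoint on $D(H_0)$; the operator identity just derived identifies it with $H_{\ir,\cutoff}'$, which gives the claim.

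The argument is essentially bookkeeping, assembled out of the three conjugation identities that are already in hand, so I do not anticipate a genuine obstacle. The only point where care is required is the telescoping of the three $\phi(\,\cdot\,)$ terms via $(1+k^2)B_{\cutoff} = G_{\ir} - G_{\cutoff}$, together with the observation that the $e^{-ikx}$ phases drop out of $C_{\ir,\cutoff}$ so that what formally looks like an operator constant is in fact a scalar.
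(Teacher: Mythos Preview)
Your proposal is correct and follows essentially the same route as the paper: sum the three conjugation identities \eqref{equ:GTrafoPSquared}--\eqref{equ:GTrafoPhi}, use $(1+k^2)B_{\cutoff}=G_{\ir}-G_{\cutoff}$ to collapse the linear field terms to $\phi(G_{\ir})$, and read off $H_{\ir}+V_{\ir,\cutoff}$; the self-adjointness on $D(H_0)$ then follows from the unitary equivalence with $H_{\cutoff}$ and Lemma~\ref{lm:dressing-parts}(b). Your added remarks on why $C_{\ir,\cutoff}$ is a genuine scalar (the $e^{-ikx}$ phases cancel in $\sprod{B_{\cutoff,x}}{G_{\cutoff,x}}$) and real-valued are useful clarifications that the paper leaves implicit.
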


The assumption $(\f 2)$ implies that $kB_{\cutoff}$ is
square integrable even for $\cutoff=\infty$, and hence, the creation and
annihilation operators $a^{*}(kB_{\cutoff})$ and $a(kB_{\cutoff})$
in $V_{\ir,\cutoff}$ are well-defined for $\cutoff=\infty$. Therefore, the first and
the third operators in the sum defining $V_{\ir,\cutoff}$ are well-defined on
$D(H_0)$ for $\cutoff=\infty$. This is not obvious for the second
term, $2p\cdot a(kB_{\cutoff})$, because $p\cdot a(kB_{\cutoff})=
a(kB_{\cutoff})\cdot p + a(k^2B_{\cutoff})$, where the norm of
$k^2B_{\cutoff}$ may diverge as $\Lambda\to\infty$. By imposing,
$(\f 3)$ this problem can be controlled with the help of
Lemma~\ref{lm:FrankSchlein} and we arrive at the following:

\begin{lemma}
\label{lemma:SetInclusionDomainOfH0InDomainOfMomentumOnAnnihilation}
Assume $(\f 1)$ and $(\f 3)$. Then, for all $\ir\leq\cutoff\leq\infty$, the operator $p\cdot a(kB_{\cutoff,x})$ satisfies
$D(p\cdot a(kB_{\cutoff,x}))\supset D(H_0)$ and 
\begin{align*}
   \sup\limits_{\cutoff\leq\infty}\norm{p\cdot
     a(kB_{\cutoff})(H_0+1)^{-1}}\longrightarrow 0\qquad (\ir\to\infty).
\end{align*}
\end{lemma}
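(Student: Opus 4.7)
The plan is to decompose, on $D(H_0)$,
\[
  p\cdot a(kB_{\cutoff}) \;=\; a(kB_{\cutoff})\cdot p \;+\; a(k^2 B_{\cutoff}),
\]
and to estimate each summand separately against $(H_0+1)\Psi$, uniformly in $\cutoff\in[\ir,\infty]$. The identity follows from the componentwise commutator computation $[p_j, a(k_j B_{\cutoff,x})]=a(k_j^2 B_{\cutoff,x})$, which in turn relies on $\partial_{x_j} B_{\cutoff,x}(k)=-ik_j B_{\cutoff,x}(k)$ (an immediate consequence of the $e^{-ikx}$-factor inside $G_{\cutoff,x}$) together with the antilinearity of $F\mapsto a(F)$. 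I would first justify this identity on the invariant core $\DD:=D(H_0)\cap\finitepart$ and extend it to $D(H_0)$ afterwards.

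For the first summand, Lemma~\ref{lm:a-like-root-N} gives
\[
  \norm{a(kB_{\cutoff})\cdot p\,\Psi}\;\le\;\sup_{x}\norm{kB_{\cutoff,x}}\,\norm{\sqrt{N}\,p\Psi},
\]
where $\sup_x\norm{kB_{\cutoff,x}}^2\le\int_{|k|\ge\ir}\frac{|\f(k)|^2}{1+k^2}\,dk\to 0$ as $\ir\to\infty$ uniformly in $\cutoff$ by $(\f 2)$. Since $N$ and $p$ commute and are non-negative, Cauchy--Schwarz yields $\norm{\sqrt{N}\,p\Psi}^2=\sprod{N\Psi}{p^2\Psi}\le\norm{N\Psi}\norm{p^2\Psi}\le\tfrac{1}{2}\norm{H_0\Psi}^2$, which closes the estimate on this piece.

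For the singular commutator term $a(k^2 B_{\cutoff})$, $\sup_x\norm{k^2 B_{\cutoff,x}}$ need not remain bounded as $\cutoff\to\infty$, so the naive estimate fails. Writing
\[
  (k^2 B_{\cutoff,x})(k)\;=\;-\,e^{-ikx}\,\f(k)\,\frac{k^2\,\chi_{\cutoff}(k)(1-\chi_{\ir}(k))}{1+k^2},
\]
where the last factor is bounded by $1$ and supported in $\{|k|\ge\ir\}$, I would invoke Lemma~\ref{lm:FrankSchlein} to obtain an estimate of the shape
\[
  \norm{a(k^2 B_{\cutoff})\Psi}^2\;\le\;C\,\sup_{q\in\R^d}\int_{|k|\ge\ir}\frac{|\f(k)|^2}{1+(q-k)^2}\,dk\;\norm{(H_0+1)\Psi}^2,
\]
after which $(\f 3)$ delivers uniform-in-$\cutoff$ vanishing as $\ir\to\infty$.

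The principal obstacle, in my view, is not the above estimates themselves but rather the extension of the commutator identity and of the domain inclusion $D(p\cdot a(kB_{\cutoff}))\supset D(H_0)$ from $\DD$ to all of $D(H_0)$ in the case $\cutoff=\infty$. I would handle this by approximating a generic $\Psi\in D(H_0)$ by $\Psi_n=\chi(N\le n)\Psi\in\DD$, which converges to $\Psi$ in the graph norm of $H_0$, noting that each of the three operators in the decomposition is $(H_0+1)$-bounded by the bounds above, and passing to the limit using the closedness of $p\cdot a(kB_{\infty})$ on its natural domain. Combining the two estimates then yields $\sup_{\cutoff\le\infty}\norm{p\cdot a(kB_{\cutoff})(H_0+1)^{-1}}\to 0$ as $\ir\to\infty$, which is the claim.
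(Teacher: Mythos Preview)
Your approach is essentially the paper's: the decomposition $p\cdot a(kB_{\cutoff}) = a(kB_{\cutoff})\cdot p + a(k^2 B_{\cutoff})$, the bound on the first summand via Lemma~\ref{lm:a-like-root-N}, and the bound on the second via Lemma~\ref{lm:FrankSchlein} together with $(\f 3)$ are exactly what the paper uses.

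The one place you diverge is the extension to $\cutoff=\infty$. You approximate $\Psi\in D(H_0)$ by $\Psi_n=\chi(N\le n)\Psi\in\DD$ and appeal to ``closedness of $p\cdot a(kB_{\infty})$ on its natural domain''. That phrase is the weak spot: $p\cdot a(kB_{\infty})$ is a sum of compositions of closed operators, and its closedness is neither proved nor obvious. What your argument actually needs---and in fact delivers---is the closedness of each $p_j$: from $a(k_j B_{\infty})\Psi_n\to a(k_j B_{\infty})\Psi$ together with convergence of $p_j a(k_j B_{\infty})\Psi_n$, conclude $a(k_j B_{\infty})\Psi\in D(p_j)$. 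The paper instead fixes $\Psi\in D(H_0)$ and sends $\cutoff\to\infty$ in the finite-cutoff identity~\eqref{equ:MomentumOnAnnihilationOperatorForFiniteCutoff}, invoking the closedness of $p_j$ explicitly. This also sidesteps a second subtlety in your route: at $\cutoff=\infty$ the function $k^2 B_{\infty,x}$ need not lie in $L^2$, so $a(k^2 B_{\infty})$ is not a priori covered by the framework of Appendix~\ref{app:creation} (Lemma~\ref{lm:FrankSchlein} as stated assumes $f\in L^2$), and your commutator identity on $\DD$ at $\cutoff=\infty$ would itself require justification. The paper's $\cutoff\to\infty$ limit only ever uses the identity for finite $\cutoff$, where everything is unproblematic.
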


\begin{proof}
The operator $p\cdot a(kB_{\cutoff})=\sum\limits_{j=1}^d p_j a(k_jB_{\cutoff})$
is defined on $\bigcap\limits_{j=1}^d D(p_ja(k_jB_{\cutoff}))$. We,
therefore, need to show that $a(k_jB_{\cutoff})\Psi\in D(p_j)$ for
all $\Psi\in D(H_0)$ and all $j=1,2,...,d$. We omit the proof for $\cutoff<\infty$ and only note that for all $\Psi\in D(H_0)$ and $\cutoff<\infty$,
\begin{align}
\label{equ:MomentumOnAnnihilationOperatorForFiniteCutoff}
   p_ja(k_jB_{\cutoff})\Psi=a(k_j^2B_{\cutoff})\Psi+a(k_jB_{\cutoff})p_j\Psi.
\end{align}
The right-hand side is convergent in the limit
$\cutoff\rightarrow\infty$ by
Lemma~\ref{lm:FrankSchlein} and Lemma~\ref{lm:a-like-root-N}. Hence, so is
the left hand side. Since, moreover, 
\begin{align*}
   \limop{\cutoff} a(k_jB_{\cutoff})\Psi=a(k_jB_{\infty})\Psi
\end{align*}
and since $p_j$ is a closed operator, it follows that $a(k_jB_{\infty})\Psi\in D(p_j)$ and that $p_ja(k_jB_{\infty})\Psi$ is given by the limit of~\eqref{equ:MomentumOnAnnihilationOperatorForFiniteCutoff}. This proves that $D(p\cdot a(kB_{\cutoff}))\supset D(H_0)$ and that
\begin{align}\label{equ:MomentumOnAnnihilationOperatorForFiniteCutoffEstimateUp}
      \norm{p_ja(k_jB_{\cutoff})\Psi}\leq \sup\limits_{\cutoff\leq\infty}\left(\|a(k_j^2B_{\cutoff})\Psi\|+\|a(k_jB_{\cutoff})p_j\Psi\|\right).
\end{align}
Using Lemma~\ref{lm:a-like-root-N} and Lemma~\ref{lm:FrankSchlein}, it is easy to see that
\begin{align}
\label{equ:MomentumOnAnnihilationOperatorEstimateUpFirstSummand}
   \norm{a(k_jB_{\infty})p_j\Psi}\leq\left(\int\limits_{\abs{k}\geq\ir}k^2\frac{\abs{v(k)}^2}{(1+k^2)^2}dk\right)^{1/2}\norm{H_0\Psi}
\end{align}
and
\begin{align}
\label{equ:MomentumOnAnnihilationOperatorEstimateUpSecondSummand}
   \norm{a(k_j^2B_{\infty})\Psi}\leq\left(\sup\limits_{q\in\R^d}\int\limits_{\abs{k}\geq\ir}\frac{\abs{v(k)}^2}{1+(q-k)^2}dk\right)^{1/2}\norm{(H_0+1)\Psi}.
\end{align}
Upon combining Inequalities~\eqref{equ:MomentumOnAnnihilationOperatorForFiniteCutoffEstimateUp}, \eqref{equ:MomentumOnAnnihilationOperatorEstimateUpFirstSummand} and \eqref{equ:MomentumOnAnnihilationOperatorEstimateUpSecondSummand}, the second assertion of the lemma follows. 
\end{proof}

\begin{theorem}
\label{theorem:selfadjointnessOfTheTransformedHamiltonian}
Assume $(\f 1)$ and $(\f 3)$. Then, for every $\eps>0$, there exist $\ir>0$ and $C_{\eps}\in\R$ such that for all $\cutoff\leq\infty$ and all $\Psi\in D(H_0)$,
\begin{align}
\label{eq:Kato-small}
   \norm{V_{\ir,\cutoff}\Psi}\leq\eps\norm{H_0\Psi}+C_{\eps}\norm{\Psi}.
\end{align}
The operator $H_{\ir,\infty}'=H_{\ir}+V_{\ir,\infty}$ is self-adjoint on $D(H_0)$ provided $\ir$ is large enough.
\end{theorem}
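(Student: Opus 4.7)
The plan is to establish the bound~\eqref{eq:Kato-small} by estimating the four operator contributions to
$$V_{\ir,\cutoff} = -2a^*(kB_\cutoff)\cdot p - 2p\cdot a(kB_\cutoff) + \phi(kB_\cutoff)^2 + C_{\ir,\cutoff}$$
separately, producing bounds of the form $\|\cdot\,\Psi\|\leq \delta(\ir)\|H_0\Psi\|+C(\ir)\|\Psi\|$ that are uniform in $\cutoff\leq\infty$ and with $\delta(\ir)\to 0$ as $\ir\to\infty$. For a prescribed $\eps>0$, choosing $\ir$ sufficiently large then yields~\eqref{eq:Kato-small}. Self-adjointness of $H'_{\ir,\infty}$ on $D(H_0)$ would follow from the Kato-Rellich theorem: $H_\ir=H_0+\phi(G_\ir)$ is already self-adjoint on $D(H_0)$, $V_{\ir,\infty}$ is symmetric on $D(H_0)$ (the formal adjoint of $a^*(kB_\infty)\cdot p$ being $p\cdot a(kB_\infty)$), and the infinitesimal $H_0$-boundedness of $\phi(G_\ir)$ from Lemma~\ref{lm:a-like-root-N} converts any sufficiently small relative $H_0$-bound into a relative $H_\ir$-bound strictly less than one.

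For the individual estimates, the second term $p\cdot a(kB_\cutoff)$ is covered directly by Lemma~\ref{lemma:SetInclusionDomainOfH0InDomainOfMomentumOnAnnihilation}, whose vanishing constant relies crucially on $(\f 3)$. For the first term, I would combine the triangle inequality over $j=1,\dots,d$, Cauchy-Schwarz, and the bound $\|a^*(f)\Psi\|\leq\|f\|\,\|(N+1)^{1/2}\Psi\|$ of Lemma~\ref{lm:a-like-root-N} to obtain
\[
   \|a^*(kB_\cutoff)\cdot p\,\Psi\| \leq \|kB_\cutoff\|\,\|(N+1)^{1/2}|p|\Psi\|,
\]
and then exploit that $p^2$ and $N+1$ commute together with the AM-GM type operator inequality $p^2(N+1)\leq\tfrac14(H_0+1)^2$ to bound the last factor by $\tfrac12\|(H_0+1)\Psi\|$. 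Since $\|kB_\cutoff\|^2\leq \int_{|k|\geq\ir}|\f(k)|^2/(1+k^2)\,dk\to 0$ by~$(\f 2)$, this is of the required form. The same $a^\#$-bounds iterated give $\|\phi(kB_\cutoff)^2\Psi\|\leq C\|kB_\cutoff\|^2\|(N+1)\Psi\|$ with the same smallness in $\ir$, and the scalar $|C_{\ir,\cutoff}|$ is bounded uniformly in $\cutoff$ by a constant that vanishes as $\ir\to\infty$ and can therefore be absorbed into $C_\eps$.

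The only delicate point in the whole scheme is the term $p\cdot a(kB_\infty)$: a naive commutation of $p$ past $a(kB_\infty)$ produces $a(k^2 B_\infty)$, whose form factor behaves like $\f(k)$ and need not be square integrable. This obstruction is precisely what motivates the stronger hypothesis~$(\f 3)$ and is precisely what Lemma~\ref{lemma:SetInclusionDomainOfH0InDomainOfMomentumOnAnnihilation} resolves, via a Frank-Schlein type estimate. Once that lemma is at our disposal, the rest of the argument is a routine assembly of creation/annihilation bounds followed by an application of Kato-Rellich.
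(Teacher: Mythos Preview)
Your proposal is correct and follows essentially the same route as the paper: estimate each of the four pieces of $V_{\ir,\cutoff}$ separately using Lemma~\ref{lm:a-like-root-N}, Lemma~\ref{lemma:standardEstimateForSquaredSegalOperator}, and Lemma~\ref{lemma:SetInclusionDomainOfH0InDomainOfMomentumOnAnnihilation}, observe that the relevant constants involve $\|kB_\cutoff\|$ or the supremum in $(\f 3)$ and hence vanish as $\ir\to\infty$, and conclude self-adjointness via Kato--Rellich. Your explicit AM--GM justification that $\sqrt{N+1}\,|p|$ is $H_0$-bounded is exactly what the paper invokes in the phrase ``$N+1$ and $\sqrt{N+1}\,p$ are $H_0$-bounded''.
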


\begin{proof}
It suffices to establish the desired estimate for each term in the sum
\begin{align}\label{eq:Kato-small-V}
   V_{\ir,\cutoff}=\phi(kB_{\cutoff})^2-2a^*(kB_{\cutoff})\cdot p-2p\cdot a(kB_{\cutoff})+C_{\ir,\cutoff}.
\end{align}
By Lemmas~\ref{lm:a-like-root-N} and~\ref{lemma:standardEstimateForSquaredSegalOperator}, for all $\Psi\in D(H_0)$,
\begin{align*}
      \norm{a^*(kB_{\cutoff})\cdot p\Psi} &\leq\norm{kB_{\cutoff}}\ \norm{\sqrt{N+1}p\Psi}, \\
      \norm{\phi(kB_{\cutoff})^2\Psi} &\leq 4\sqrt{2}\norm{kB_{\cutoff}}^2\ \norm{(N+1)\Psi},
\end{align*}
where $N+1$ and $\sqrt{N+1}p$ are $H_0$-bounded and 
\begin{align*}
   \norm{kB_{\cutoff}}^2\leq\int\limits_{\abs{k}\geq\ir}k^2\frac{\abs{v(k)}^2}{(1+k^2)^2}dk\to 0\qquad (\ir\to\infty).
\end{align*}
This proves Inequality~\eqref{eq:Kato-small} as far as the first two terms in~\eqref{eq:Kato-small-V} are concerned. For the operator $p\cdot a(kB_{\cutoff})$, the desired estimate follows from Lemma~\ref{lemma:SetInclusionDomainOfH0InDomainOfMomentumOnAnnihilation} and $C_{\ir,\cutoff}$ is bounded uniformly in $\cutoff$.

In view of~\eqref{eq:Kato-small}, the self-adjointness follows from Kato-Rellich because $H_{\ir}=H_0+\phi(G_{\ir})$ where $\phi(G_{\ir})$ is infinitesimally $H_0$-bounded for every given fixed $\ir$.
\end{proof}

\begin{prop}
\label{prop:dressed-R-converge}
Assume $(\f 1)$ and $(\f 3)$. Then, for $\ir$ sufficiently large, $H_{\ir,\cutoff}'\rightarrow H_{\ir,\infty}'$ in the norm resolvent sense as $\cutoff\rightarrow\infty$.
\end{prop}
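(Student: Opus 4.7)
The plan is to establish norm resolvent convergence via the standard strategy: prove that $V_{\ir,\cutoff} - V_{\ir,\infty}$ tends to zero in operator norm relative to $H_0$, and then combine this with the uniform Kato--Rellich bound already supplied by Theorem~\ref{theorem:selfadjointnessOfTheTransformedHamiltonian}. Concretely, the key intermediate statement to establish is
$$
   \norm{(V_{\ir,\cutoff} - V_{\ir,\infty})(H_0+1)^{-1}} \longrightarrow 0 \qquad (\cutoff\to\infty).
$$
Once this is proved, one chooses $c$ large enough that $H'_{\ir,\cutoff}+c\geq 1$ uniformly in $\cutoff\leq\infty$ (possible because \eqref{eq:Kato-small} holds uniformly in $\cutoff$ with $\eps<1$), which simultaneously gives $\sup_{\cutoff\leq\infty}\norm{(H_0+1)(H'_{\ir,\cutoff}+c)^{-1}}<\infty$. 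The second resolvent identity
$$
   (H'_{\ir,\cutoff}+c)^{-1}-(H'_{\ir,\infty}+c)^{-1} = (H'_{\ir,\cutoff}+c)^{-1}(V_{\ir,\infty}-V_{\ir,\cutoff})(H'_{\ir,\infty}+c)^{-1}
$$
then yields the claim directly after inserting $(H_0+1)^{-1}(H_0+1)$ between the middle factor and the right-hand resolvent.

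To control the four summands of $V_{\ir,\cutoff}$ listed in Proposition~\ref{proposition:transformedFroehlichHamiltonian}, I would proceed term by term. For $a^{*}(k(B_\cutoff - B_\infty))\cdot p$, Lemma~\ref{lm:a-like-root-N} together with the $H_0$-boundedness of $\sqrt{N+1}\,p$ produces a bound proportional to $\norm{k(B_\cutoff - B_\infty)}$, which tends to zero by $(\f 2)$. The quadratic term can be written
$$
   \phi(kB_\cutoff)^2 - \phi(kB_\infty)^2 = \phi(k(B_\cutoff - B_\infty))\phi(kB_\cutoff) + \phi(kB_\infty)\phi(k(B_\cutoff - B_\infty)),
$$
and estimated via Lemma~\ref{lemma:standardEstimateForSquaredSegalOperator} and the $H_0$-boundedness of $N+1$, the result again being proportional to $\norm{k(B_\cutoff - B_\infty)}$. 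The numerical constant $C_{\ir,\cutoff}-C_{\ir,\infty}$ vanishes in the limit by dominated convergence under $(\f 2)$.

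The genuinely delicate piece is $p\cdot a(k(B_\cutoff - B_\infty))$: commuting $p$ past the annihilator exactly as in the proof of Lemma~\ref{lemma:SetInclusionDomainOfH0InDomainOfMomentumOnAnnihilation} produces two contributions, one of the form $a(k(B_\cutoff - B_\infty))\cdot p$ (treated as before), and a stray $a(k^2(B_\cutoff - B_\infty))$ whose form factor need not be square integrable. This is precisely the point at which $(\f 3)$ is indispensable: the pointwise bound $|k^2(B_\cutoff - B_\infty)_x(k)|^2\leq \abs{\f(k)}^2\chi_{\abs{k}>\cutoff}$ combined with Lemma~\ref{lm:FrankSchlein} gives
$$
   \norm{a(k^2(B_\cutoff - B_\infty))(H_0+1)^{-1}} \leq \left(\sup_{q\in\R^d}\int_{\abs{k}>\cutoff}\frac{\abs{\f(k)}^2}{1+(q-k)^2}\,dk\right)^{1/2}\longrightarrow 0,
$$
which is the main obstacle of the proof. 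Every other estimate is essentially a rerun of the ingredients already assembled for Theorem~\ref{theorem:selfadjointnessOfTheTransformedHamiltonian}, so the overall argument is short once this term is under control.
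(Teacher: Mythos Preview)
Your proposal is correct and follows essentially the same route as the paper: both reduce to showing $\|(V_{\ir,\cutoff}-V_{\ir,\infty})(H_0+1)^{-1}\|\to 0$ (the paper invokes Reed--Simon Theorem~VIII.25(b) and the boundedness of $(H_0+i)(H'_{\ir,\infty}+i)^{-1}$, you write out the second resolvent identity directly), and both decompose the difference term-by-term, handling the linear and quadratic field terms via Lemmas~\ref{lm:a-like-root-N} and~\ref{lemma:standardEstimateForSquaredSegalOperator}, the constant by dominated convergence, and the critical $p\cdot a$ term via the commutator trick plus Lemma~\ref{lm:FrankSchlein} exactly as in Lemma~\ref{lemma:SetInclusionDomainOfH0InDomainOfMomentumOnAnnihilation}. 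The paper even records the same algebraic identity $B_{\cutoff}-B_{\infty}=-B_{\infty}\chibar_{\cutoff}$ that underlies your splitting.
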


\begin{proof}
For short, we set $H_{\cutoff}':=H_{\ir,\cutoff}'$ in this proof. By Theorem~\ref{theorem:selfadjointnessOfTheTransformedHamiltonian}, $H_{\cutoff}'$ is self-adjoint on $D(H_0)$ for all $\cutoff\leq\infty$ if $\ir$ is sufficiently large. In view of Theorem VIII.25(b) from~\cite{ReedSimonI1972}, it therefore suffices to prove that
\begin{equation*}
   (H_{\cutoff}'-H_{\infty}')(H_{\infty}'+i)^{-1}\rightarrow 0\qquad(\cutoff\to\infty)
\end{equation*}
which is equivalent to
\begin{equation}\label{eq:nrc-prime}
   (H_{\cutoff}'-H_{\infty}')(H_{0}+i)^{-1}\rightarrow 0\qquad(\cutoff\to\infty)
\end{equation}
due to the boundedness of $(H_0+i)(H_{\infty}'+i)^{-1}$. By definition of $H_{\cutoff}'$, see Proposition~\ref{proposition:transformedFroehlichHamiltonian}, we have
\begin{align}
   H_{\cutoff}'-H_{\infty}'=&\ V_{\ir,\cutoff}-V_{\ir,\infty} \notag \\
                       =&-\phi(kB_{\cutoff})\phi(kB_{\infty}\chibar_{\cutoff})-\phi(kB_{\infty}\chibar_{\cutoff})\phi(kB_{\infty}) \notag \\
   \label{equ:differenceBetweenTheTransformedHamiltoniansWithAndWithoutCutoff}   &+2a^*(kB_{\infty}\chibar_{\cutoff})\cdot p+2p\cdot a(kB_{\infty}\chibar_{\cutoff})+C_{\cutoff,\infty}.
\end{align}
Here, $\chibar_{\cutoff}:=1-\chi_{\cutoff}$, and we used $B_{\cutoff,x}-B_{\infty,x}=-B_{\infty,x}\chibar_{\cutoff}$. Convergence~\eqref{eq:nrc-prime} now follows from~\eqref{equ:differenceBetweenTheTransformedHamiltoniansWithAndWithoutCutoff} by the same estimates that were used in the proof of Theorem~\ref{theorem:selfadjointnessOfTheTransformedHamiltonian}.
\end{proof}


\begin{corollary}
\label{corollary:relativeBoundednessOfTheTransformedHamiltonianWithRespectToTheFreeHamiltonian}
Assume $(\f 1)$ and $(\f 3)$. Then for $\ir$ sufficiently large, there exists a constant $C$ such that for all $\cutoff\leq\infty$,
$$
   \frac{1}{2}H_0-C\leq H_{K,\cutoff}'\leq \frac{3}{2}H_0+C.
$$
\end{corollary}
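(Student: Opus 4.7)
The plan is to write $H'_{\ir,\cutoff} = H_{\ir} + V_{\ir,\cutoff}$ with $H_{\ir} = H_0 + \phi(G_{\ir})$, and to establish two independent quadratic-form bounds
$$
   \pm\phi(G_{\ir})\,\le\,\tfrac14 H_0 + C_1,\qquad \pm V_{\ir,\cutoff}\,\le\,\tfrac14 H_0 + C_2,
$$
valid on $D(H_0)$ uniformly in $\cutoff\le\infty$, provided $\ir$ is chosen large enough. Adding the two inequalities and combining with $H_0$ on the appropriate side immediately yields $\tfrac12 H_0 - C \le H'_{\ir,\cutoff} \le \tfrac32 H_0 + C$ with $C=C_1+C_2$, which is the claim.

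The first bound is just Corollary~\ref{lm:W-is-small}(a) applied with $\eps = 1/4$, and is in fact uniform in $\ir$. For the second bound, I would \emph{not} attempt to invoke the Kato-type operator estimate~\eqref{eq:Kato-small} of Theorem~\ref{theorem:selfadjointnessOfTheTransformedHamiltonian}: an operator bound of the form $\|V\Psi\|\le\varepsilon\|H_0\Psi\|+C\|\Psi\|$ does not in general upgrade to an $H_0$-form bound, because $\|H_0\Psi\|\|\Psi\|$ cannot be controlled by $\langle\Psi, H_0\Psi\rangle$. Instead, I would inspect each of the three non-trivial contributions to
$$
   V_{\ir,\cutoff}=-2a^*(kB_{\cutoff})\cdot p-2p\cdot a(kB_{\cutoff})+\phi(kB_{\cutoff})^2+C_{\ir,\cutoff}
$$
at the form level. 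The symmetric cross term is estimated via Cauchy--Schwarz together with Lemma~\ref{lm:a-like-root-N} and the AM--GM inequality, giving the bound $2\|kB_{\cutoff}\|\langle\Psi, H_0\Psi\rangle$. The non-negative term $\phi(kB_{\cutoff})^2$ is bounded from above by $C\|kB_{\cutoff}\|^2\langle\Psi,(H_0+1)\Psi\rangle$ via Lemma~\ref{lemma:standardEstimateForSquaredSegalOperator}. The c-number $C_{\ir,\cutoff}$ is uniformly bounded in $\cutoff$ by $(\f 2)$ once $\ir$ is fixed. Since $\|kB_{\cutoff}\|\le \|kB_{\infty}\|\to 0$ as $\ir\to\infty$ uniformly in $\cutoff$ (by the integral representation used in the proof of Theorem~\ref{theorem:selfadjointnessOfTheTransformedHamiltonian}), the three contributions together remain below $\tfrac14 H_0 + C_2$ for $\ir$ sufficiently large.

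The only delicate point is the observation above: the Kato \emph{operator} bound cannot be quoted directly, and one must instead redo the estimates from the proof of Theorem~\ref{theorem:selfadjointnessOfTheTransformedHamiltonian} at the quadratic-form level. Fortunately, exactly the same auxiliary lemmas apply without modification, and the small coefficient $\|kB_{\cutoff}\|$ is recovered at each step, which is what makes the argument close.
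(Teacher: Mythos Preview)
Your proof is correct, but it takes a different route from the paper, and your stated reason for avoiding the operator bound is mistaken. The paper's proof goes exactly through the Kato estimate~\eqref{eq:Kato-small}: from Theorem~\ref{theorem:selfadjointnessOfTheTransformedHamiltonian} (together with the infinitesimal $H_0$-boundedness of $\phi(G_{\ir})$) one gets $\|(V_{\ir,\cutoff}+\phi(G_{\ir}))\Psi\|\le \tfrac14\|H_0\Psi\|+\tfrac{C}{2}\|\Psi\|$ for $\ir$ large, uniformly in $\cutoff\le\infty$. Interpreting this as a relative bound of $\pm(V_{\ir,\cutoff}+\phi(G_{\ir}))$ with respect to $\tfrac12 H_0$ (relative bound $\tfrac12<1$), the \emph{lower bound} part of the Kato--Rellich theorem (Reed--Simon~II, Theorem~X.12) gives directly $\tfrac12 H_0 \pm (V_{\ir,\cutoff}+\phi(G_{\ir}))\ge -C$, which is the desired pair of form inequalities. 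So an operator bound with relative bound $<1$ \emph{does} yield the required two-sided form comparison; the step you flagged as problematic is precisely what Kato--Rellich provides.

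That said, your direct form-level argument is perfectly valid and arguably more elementary: you bypass Kato--Rellich entirely by estimating the cross term $-2a^*(kB_\cutoff)\cdot p - 2p\cdot a(kB_\cutoff)$, the square $\phi(kB_\cutoff)^2$, and the constant $C_{\ir,\cutoff}$ one by one as quadratic forms, recovering the small prefactor $\|kB_\cutoff\|$ in each case. The paper's route is shorter because it recycles Theorem~\ref{theorem:selfadjointnessOfTheTransformedHamiltonian} wholesale; your route is more self-contained and makes the form smallness transparent without reference to the abstract Kato--Rellich lower bound. Both are fine; just drop the claim that the operator bound ``cannot be quoted directly.''
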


\begin{proof}
By Theorem~\ref{theorem:selfadjointnessOfTheTransformedHamiltonian}, there exist $\ir$ and $C$ such that
$$
   \norm{(V_{\ir,\cutoff}+\phi(G_{\ir}))\Psi}\leq \frac{1}{4}\norm{H_0\Psi}+\frac{C}{2}\norm{\Psi}
$$
for all $\Psi\in D(H_0)$ and $\cutoff\leq\infty$. Using the lower bound given by the Kato-Rellich-theorem, Theorem X.12 from~\cite{ReedSimonII1975}, we conclude that
$$
   \frac{1}{2}H_0\pm \left(\phi(G_{\ir})+V_{\ir,\cutoff}\right)\geq -C, 
$$
which implies the desired inequalities.
\end{proof}

\begin{theorem}
\label{thm:domain-H}
Assume $(\f 1)$ and $(\f 3)$. Then, there exists a self-adjoint operator $H$ such that $H_{\cutoff}\rightarrow H$ as $\cutoff\rightarrow\infty$ in the norm resolvent sense. This operator has the representation
\begin{align*}
   H=U_{\ir,\infty}^*H_{\ir,\infty}'U_{\ir,\infty}, \quad D(H)=U_{\ir,\infty}^*D(H_0),
\end{align*}
which is valid for $\ir$ sufficiently large. If $D\subset D(H_0)$ is a core of $H_0$, then $U_{\ir,\infty}^*D$ is a core of $H$.
\end{theorem}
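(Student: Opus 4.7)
The plan is to combine the norm-resolvent convergence already produced in Theorem~\ref{thm:nrs} with the intertwining identity from Proposition~\ref{proposition:transformedFroehlichHamiltonian} and read off both the representation of $H$ and its domain. First I would note that $(\f 3)$ implies $(\f 2)$ (take $q=0$ in $(\f 3)$ and use $\f\in L_{loc}^2$), so Theorem~\ref{thm:nrs} applies and produces a self-adjoint $H$ with $H_\cutoff\to H$ in the norm-resolvent sense. The remaining task is to identify $H$ with $U_{\ir,\infty}^* H_{\ir,\infty}' U_{\ir,\infty}$, after fixing $\ir$ large enough for Theorem~\ref{theorem:selfadjointnessOfTheTransformedHamiltonian} and Proposition~\ref{prop:dressed-R-converge} to be in force.

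The key step is to take resolvents in the identity $H_\cutoff=U_\cutoff^* H_{\ir,\cutoff}' U_\cutoff$ and pass to $\cutoff\to\infty$. The left-hand side $(H_\cutoff+i)^{-1}$ tends in operator norm to $(H+i)^{-1}$. On the right-hand side, the middle factor converges in norm to $(H_{\ir,\infty}'+i)^{-1}$ by Proposition~\ref{prop:dressed-R-converge}, while $U_\cutoff\to U_{\ir,\infty}$ strongly, as noted after~\eqref{def:B-cutoff}, with strong convergence of the adjoints following from unitarity. A three-term triangle estimate then gives strong convergence of the whole right-hand side to $U_{\ir,\infty}^*(H_{\ir,\infty}'+i)^{-1}U_{\ir,\infty}$. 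Matching the two limits yields
\[
    (H+i)^{-1} = U_{\ir,\infty}^*\,(H_{\ir,\infty}'+i)^{-1}\,U_{\ir,\infty},
\]
and since $H_{\ir,\infty}'$ is self-adjoint on $D(H_0)$, equality of resolvents forces $H=U_{\ir,\infty}^* H_{\ir,\infty}' U_{\ir,\infty}$ with $D(H)=U_{\ir,\infty}^* D(H_0)$.

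For the core assertion, I would combine~\eqref{eq:Kato-small} with the infinitesimal $H_0$-boundedness of $\phi(G_\ir)$ to see that $H_{\ir,\infty}'-H_0=\phi(G_\ir)+V_{\ir,\infty}$ is $H_0$-bounded with relative bound strictly less than one. A standard stability-of-cores argument under Kato-Rellich perturbations then shows that any core $D$ of $H_0$ remains a core of $H_{\ir,\infty}'$. Conjugation by the unitary $U_{\ir,\infty}^*$ transports this property across the graphs, giving $U_{\ir,\infty}^* D$ as a core of $H$.

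The only genuine subtlety is that $U_\cutoff\to U_{\ir,\infty}$ merely strongly, so the right-hand side of the intertwined resolvent identity cannot be shown to converge in norm by direct estimation. This is harmless here because norm-resolvent convergence $H_\cutoff\to H$ was already secured in Section~\ref{sec:FroehlichHamiltonian} without reference to the Gross transform; one therefore needs only \emph{strong} convergence from the transformed side, and uniqueness of the limit does the rest.
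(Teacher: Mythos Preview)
Your argument is correct, but it takes a different route from the paper. You first invoke Theorem~\ref{thm:nrs} (applicable since $(\f3)\Rightarrow(\f2)$) to obtain the self-adjoint limit $H$ and norm-resolvent convergence, and then identify $H$ with $U_{\ir,\infty}^* H_{\ir,\infty}' U_{\ir,\infty}$ via a strong-convergence argument on the transformed side. The paper instead \emph{defines} $H:=U_{\ir,\infty}^* H_{\ir,\infty}' U_{\ir,\infty}$ and proves norm-resolvent convergence $H_\cutoff\to H$ directly from the Gross-transformed picture: in the three-term splitting of $(H_\cutoff-z)^{-1}-(H-z)^{-1}$ it upgrades the outer terms to \emph{norm} convergence by combining Lemma~\ref{lm:strongWeyl}, which gives $\|(U_\cutoff^*-U_{\ir,\infty}^*)(H_0+1)^{-1/2}\|\to 0$, with the uniform bound $\sup_{\cutoff\leq\infty}\|(H_0+1)^{1/2}(H_{\ir,\cutoff}'-z)^{-1}\|<\infty$ from Corollary~\ref{corollary:relativeBoundednessOfTheTransformedHamiltonianWithRespectToTheFreeHamiltonian}. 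Your approach is shorter and avoids these two technical ingredients, at the price of relying on Section~\ref{sec:FroehlichHamiltonian}; the paper's approach is self-contained within Section~\ref{sec:GrossTransformation} and yields an independent second proof of the norm-resolvent convergence. One minor remark: your justification ``strong convergence of the adjoints following from unitarity'' is in fact valid for unitaries (via $\|(U_\cutoff^*-U_\infty^*)\Psi\|^2=2\|\Psi\|^2-2\Rea\sprod{\Psi}{U_\cutoff U_\infty^*\Psi}\to 0$), though here one could equally just apply the same strong-convergence argument to $e^{-i\pi(B_\cutoff)}$. The core argument is the same in both proofs.
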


In this theorem, $U_{\ir,\infty} = U_{\infty}$ to exhibit the dependence of $U_{\infty}$ on $K$.
The theorem implies, in particular, that $H:=U_{\ir,\infty}^*H_{\ir,\infty}'U_{\ir,\infty}$
is independent of $\ir$ for $\ir$ sufficiently large. Because of the convergence
$H_{\cutoff}\rightarrow H$, this operator coincides with the operator constructed
in Section~\ref{sec:FroehlichHamiltonian}.

\begin{proof}
Choose $\ir$ so large that $H_{\ir,\cutoff}'\rightarrow H_{\ir,\infty}'$ in the norm resolvent 
sense by Proposition~\ref{prop:dressed-R-converge}. In the following, $\ir$ is fixed and 
suppressed. Let $R_{\cutoff}'(z):=(H_{\cutoff}'-z)^{-1}$ and 
$H:=U_{\infty}^*H_{\infty}'U_{\infty}$ in this proof. By Proposition~\ref{proposition:transformedFroehlichHamiltonian}, 
$H_{\cutoff}=U_{\cutoff}^*H_{\cutoff}'U_{\cutoff}$ for all $\cutoff<\infty$ and, therefore,
\begin{align}
\label{equ:NRC-FH-GT-0}
   (H_{\cutoff}-z)^{-1}-(H-z)^{-1}=& \ U_{\cutoff}^*R_{\cutoff}'(z)U_{\cutoff}-U_{\infty}^*R_{\infty}'(z)U_{\infty} \notag \\
                   =& \ (U_{\cutoff}^*-U_{\infty}^*)R_{\cutoff}'(z)U_{\cutoff}+U_{\infty}^*(R_{\cutoff}'(z)-R_{\infty}'(z))U_{\cutoff} \notag \\
                               &+U_{\infty}^*R_{\infty}'(z)(U_{\cutoff}-U_{\infty}).
\end{align}
It remains to show that these three terms vanish in the limit $\cutoff\rightarrow\infty$. For the second term, this follows from Proposition~\ref{prop:dressed-R-converge}. For the first and third terms, we have
\begin{align}
\label{equ:NRC-FH-GT-1}
   \norm{(U_{\cutoff}^*-U_{\infty}^*)R_{\cutoff}'(z)U_{\cutoff}}\leq\norm{(U_{\cutoff}^*-U_{\infty}^*)(H_0+1)^{-1/2}}\cdot\norm{(H_0+1)^{1/2}R_{\cutoff}'(z)}
\end{align}
and
\begin{align}
   \norm{U_{\infty}^*R_{\infty}'(z)(U_{\cutoff}-U_{\infty})}&=\norm{(U_{\cutoff}^*-U_{\infty}^*)R_{\infty}'(\overline{z})}\nonumber\\
               &\leq\norm{(U_{\cutoff}^*-U_{\infty}^*)(H_0+1)^{-1/2}}\cdot\norm{(H_0+1)^{1/2}R_{\infty}'(\overline{z})}.\label{equ:NRC-FH-GT-2}
\end{align}
Lemma~\ref{lm:strongWeyl} implies that
\begin{align}
\label{equ:NRC-FH-GT-3}
   \norm{(U_{\cutoff}^*-U_{\infty}^*)(H_0+1)^{-1/2}}\rightarrow 0 \qquad (\cutoff\rightarrow\infty)
\end{align}
because $\sup\limits_{x\in\R^d}\norm{B_{\cutoff,x}-B_{\infty,x}}\rightarrow 0$ as $\cutoff\rightarrow\infty$, and the Corollary~\ref{corollary:relativeBoundednessOfTheTransformedHamiltonianWithRespectToTheFreeHamiltonian} shows that
\begin{align}
\label{equ:equ:NRC-FH-GT-4}
   \sup\limits_{\cutoff\leq\infty}\norm{(H_0+1)^{1/2}R_{\cutoff}'(z)}<\infty.
\end{align}
Combining Properties~\eqref{equ:NRC-FH-GT-1}, \eqref{equ:NRC-FH-GT-2}, \eqref{equ:NRC-FH-GT-3}, and~\eqref{equ:equ:NRC-FH-GT-4} we see that the first and third terms from~\eqref{equ:NRC-FH-GT-0} vanish as $\cutoff\rightarrow\infty$. The statement about $D(H)$ follows from Theorem~\ref{theorem:selfadjointnessOfTheTransformedHamiltonian}.

Now, if $D\subset D(H_0)$ is a core of $H_0$, then, by Theorem~\ref{theorem:selfadjointnessOfTheTransformedHamiltonian}, $D$ is a core of $H_{\infty}'$, and hence, $U_{\infty}^*D$ is a core of $H=U_{\infty}^*H_{\infty}'U_{\infty}$.
\end{proof}

\section{Regularity of domain vectors}

\label{sec:regularity}
In this section, we prove Equations \eqref{D-supset} and \eqref{D-cap}
of the Introduction. As a preparation we need the following lemma, which generalizes the statement of Lemma~\ref{lm:dressing-parts} (a), on the invariance of $D(H_0^{1/2}) = D(\abs{p})\cap D(N^{1/2})$ under the transformation $U_{\infty}$.

\begin{lemma}
\label{lm:invariance-of-p-domains}
Assume $(\f 1)$ and $(\f 2)$. Then, for $\sigma\in [0,1]$, the subspaces $D(\abs{p}^{\sigma})\cap D(N^{\sigma/2})$ are left invariant by $U_{\infty}$ and $U_{\infty}^{*}$. 
\end{lemma}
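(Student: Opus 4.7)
The plan is to prove the statement by interpolating between the endpoint cases $\sigma=0$ and $\sigma=1$. At $\sigma=0$ nothing needs to be shown because $U_\infty$ is unitary on $\HH$. At $\sigma=1$ the invariance of $D(\abs{p})\cap D(N^{1/2})=D(H_0^{1/2})$ under $U_\infty$ and $U_\infty^*$ is the content of Lemma~\ref{lm:dressing-parts}(a), and the closed graph argument used in the proof of part (c) of that lemma moreover shows that $U_\infty$ and $U_\infty^*$ are bounded operators on the Hilbert space $\HH_1:=(D(H_0^{1/2}),\|\cdot\|_0)$. Since the conclusions for $U_\infty$ and $U_\infty^*$ are obtained from each other by the substitution $v\leftrightarrow -v$ (which sends $B_\infty$ to $-B_\infty$ and hence $U_\infty$ to $U_\infty^*$), it suffices to treat $U_\infty$.

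The second step is to reinterpret the spaces of interest in a scale adapted to $H_0$. Because $-\Delta$ and $N$ act on different tensor factors of $\HH=L^2(\R^d)\otimes\fock$, the operators $p^2$ and $N$ commute strongly, so a joint spectral representation is available. Combined with the elementary two-sided bound
$$
   \tfrac{1}{2}(\lambda^\sigma+\mu^\sigma)\le (\lambda+\mu)^\sigma\le \lambda^\sigma+\mu^\sigma \qquad (\lambda,\mu\ge 0,\ \sigma\in[0,1]),
$$
applied pointwise to the joint spectral measure, this yields the identification $D(H_0^{\sigma/2})=D(\abs{p}^\sigma)\cap D(N^{\sigma/2})$ with equivalence of the corresponding graph norms. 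An application of the standard complex interpolation theorem for bounded operators between Hilbert spaces, together with the classical identification $[\HH,D(T)]_\theta=D(T^\theta)$ for a non-negative self-adjoint operator $T$ (applied here to $T=H_0^{1/2}$), then guarantees that $U_\infty$, being bounded on $\HH$ and on $\HH_1$, is bounded on $[\HH,\HH_1]_\sigma=D(H_0^{\sigma/2})$. Combining this with the previous identification delivers the desired invariance.

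The main obstacle I expect is the appeal to the abstract complex interpolation result, which is heavier than the elementary operator-theoretic tools used elsewhere in the paper. A more self-contained alternative is to establish the bound $\|(H_0+1)^{\sigma/2}U_\infty(H_0+1)^{-\sigma/2}\|<\infty$ directly, writing $(H_0+1)^{\sigma/2}$ via Balakrishnan's integral representation of fractional powers and reducing the estimate to the already-known boundedness of $U_\infty$ on $\HH$ and on $\HH_1$; this is more computational but avoids any non-elementary interpolation theorem.
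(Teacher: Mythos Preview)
Your argument is correct and takes a genuinely different route from the paper. The paper does \emph{not} identify $D(\abs{p}^{\sigma})\cap D(N^{\sigma/2})$ with $D(H_0^{\sigma/2})$ and then interpolate the bounded action of $U_\infty$ on $\HH$ and on $D(H_0^{1/2})$. Instead, it treats the two factors separately: the invariance of $D(N^{\sigma/2})$ is read off from Lemma~\ref{lm:UNUphi} (whose proof already contains a Hadamard three-lines interpolation for the $N$-scale), while for the momentum part the paper uses the explicit identity $U_\infty^{*}pU_\infty=p+\phi(kB_\infty)$ together with the quadratic-form bound $(p+\phi(kB_\infty))^2\le 2p^2+2C(2N+1)$ and the L\"owner--Heinz inequality (operator monotonicity of $t\mapsto t^{\sigma}$) to conclude that $D(\abs{p+\phi(kB_\infty)}^{\sigma})\supset D(\abs{p}^{\sigma})\cap D(N^{\sigma/2})$. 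Your approach is more economical and conceptual: once the identification with $D(H_0^{\sigma/2})$ is made, the result is a direct instance of complex interpolation between the endpoints, and it does not use the transformed-momentum formula beyond what is already needed for the case $\sigma=1$. The paper's approach, on the other hand, stays closer to the concrete operator structure, uses only the tools already set up (Lemma~\ref{lm:UNUphi} and the L\"owner--Heinz inequality cited from \cite{Schmuedgen}), and avoids invoking the abstract identification $[\HH,D(T)]_\theta=D(T^{\theta})$. Your worry about the interpolation step being ``heavier'' is largely unfounded here, since the paper itself relies on Hadamard three-lines in Lemma~\ref{lm:UNUphi}; both arguments are of comparable depth.
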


\begin{proof}
Let $\gamma\in D(\abs{p}^{\sigma})\cap D(N^{\sigma/2})$. Then, $U_{\infty}\gamma\in D(N^{\sigma/2})$, by Lemma~\ref{lm:UNUphi}, and it remains to prove that  $U_{\infty}\gamma\in D(\abs{p}^{\sigma})$. Since $U_{\infty}^*pU_{\infty}=p+\phi(kB_{\infty})$, by Lemma~\ref{lm:dressing-parts} and the remark thereafter, $U_{\infty}\gamma\in D(\abs{p}^{\sigma})$ is equivalent to $\gamma\in D(\abs{p+\phi(kB_{\infty})}^{\sigma})$. To prove the latter, we first observe that $\phi(kB_{\infty})^2 \leq C(2N+1)$, where $C=2\norm{kB_{\infty}}^2$, and hence,
$$
   (p+\phi(kB_{\infty}))^2 \leq 2p^2+2\phi(kB_{\infty})^2 \leq 2(p^2 +C (2N+1)).
$$
This means, in particular, that the form domain of $(p+\phi(kB_{\infty}))^2$ contains the form domain of $p^2+N$. 
From the operator monotonicity of the fractional power $\sigma$ (see \cite{Schmuedgen}, Proposition~10.14), it follows that
\begin{align}
   \abs{p+\phi(kB_{\infty})}^{2\sigma} &\leq 2^{\sigma}\left(\abs{p}^{2\sigma}+ C^{\sigma}(2N+1)^{\sigma}\right).\label{eq:monotone}
\end{align}
Inequality \eqref{eq:monotone} again includes a statement about form domains. It implies that 
$D(\abs{p+\phi(kB_{\infty})}^{\sigma})\supset D(|p|^{\sigma})\cap D(N^{\sigma/2})$. In view of the assumption on $\gamma$, this is exactly what we needed to show. 
\end{proof}


\begin{theorem}\label{thm:domain-intersect}
\label{theorem:generalizationOfTheNonInvarianceOfTheDomainWithoutCutoff}
Assume $(\f 1)$ and $(\f 2)$. If $\int\abs{v(k)}^2(1+k^2)^{s-2}dk=\infty$ for some $s\in (1,2]$, then 
$$
   U_{\infty}^*D(H_0)\cap D((-\Delta)^{s/2})=\{0\}.
$$
\end{theorem}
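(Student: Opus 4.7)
I plan to show that $\Psi = U_\infty^*\Phi \in D((-\Delta)^{s/2})$ with $\Phi\in D(H_0)$ forces $\Psi = 0$, by induction on the minimal phonon index $n_0 := \min\{n:\Psi^{(n)}\neq 0\}$, exploiting that $U_\infty^*$ creates a coherent tail $B_{\infty,\cdot}\cdot\Psi^{(n_0)}$ in the $(n_0+1)$-phonon sector which is incompatible with $(-\Delta_x)^{s/2}$-regularity under the divergence hypothesis.

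I would first apply the Baker-Campbell-Hausdorff identity $U_\infty^* = e^{-\beta/2}e^{a^*(B_\infty)}e^{-a(B_\infty)}$ with $\beta := \|B_{\infty,x}\|^2$ (pointwise in $x$; $\beta$ is $x$-independent by $(\f 1)$). Writing $\tilde\Phi := e^{-a(B_\infty)}\Phi$, the triangular structure of $e^{a^*(B_\infty)}$ on the Fock grading yields $\Psi^{(n+1)} = e^{-\beta/2}\tilde\Phi^{(n+1)} + \sum_{l\geq 1}\frac{e^{-\beta/2}}{l!}(a^*(B_\infty))^l\tilde\Phi^{(n+1-l)}$. The vanishing $\Psi^{(j)} = 0$ for $j<n_0$ is equivalent, by triangularity, to $\tilde\Phi^{(j)} = 0$ for $j<n_0$, so only the $l=0, 1$ terms contribute in the formula for $\Psi^{(n_0+1)}$:
\[
\Psi^{(n_0+1)}(x;\vec k) = e^{-\beta/2}\tilde\Phi^{(n_0+1)}(x;\vec k) + \frac{1}{\sqrt{n_0+1}}\sum_{j=1}^{n_0+1} B_{\infty,x}(k_j)\Psi^{(n_0)}(x;\vec k\setminus k_j).
\]

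Granting the technical lemma that $\tilde\Phi^{(n_0+1)}\in D((-\Delta_x)^{s/2})$ (addressed in the last paragraph), the hypothesis $\Psi \in D((-\Delta)^{s/2})$ forces the coherent-state tail $\sum_j B_{\infty,x}(k_j)\Psi^{(n_0)}(x;\vec k\setminus k_j)$ to lie in $D((-\Delta_x)^{s/2})$. Using $B_{\infty,x}(k) = -e^{-ikx}g(k)$ with $g(k):=v(k)(1-\chi_{\ir}(k))/(1+k^2)$, Fourier in $x$ and the substitution $q = q'-k_{n_0+1}$ (exploiting $\vec k$-symmetry of $\Psi^{(n_0)}$) reduce the $L^2$-finiteness of this tail in $D((-\Delta_x)^{s/2})$ to the finiteness of an integral dominated from below by
\[
\int\Bigl(\int_{|k|\geq\ir}|q'-k|^{2s}|g(k)|^2\,dk\Bigr)\,\|\widehat{\Psi^{(n_0)}}(q';\,\cdot\,)\|_{L^2_{k_1,\dots,k_{n_0}}}^2\,dq'.
\]
For $|k|\geq 2|q'|+1$ one has $|q'-k|^{2s}\geq(|k|/2)^{2s}$, and $|k|^{2s}|g(k)|^2\geq c|v(k)|^2(1+k^2)^{s-2}$ for $|k|\geq \ir\vee 1$; the inner $k$-integral thus diverges for every $q'$ in the support of $\widehat{\Psi^{(n_0)}}$, by the hypothesis on $v$. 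Hence $\Psi^{(n_0)} = 0$, contradicting the minimality of $n_0$, and thus $\Psi = 0$.

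The principal obstacle is the technical lemma $\tilde\Phi^{(n_0+1)}\in D((-\Delta_x)^{s/2})$ for $\Phi\in D(H_0)$. Expanding $\tilde\Phi^{(n_0+1)} = \sum_m c_m\, T_m\Phi^{(m+n_0+1)}$ with $T_m\Phi(x;\vec k) := \int\bar B_{\infty,x}^{\otimes m}(\vec k')\Phi(x;\vec k',\vec k)\,d\vec k'$ and Fock-grading coefficients $c_m$, a direct Cauchy-Schwarz in $\vec k'$ produces the divergent factor $\int|\sum_j k_j'|^{2s}|h^{\otimes m}(\vec k')|^2\,d\vec k'$, precisely because of the divergence hypothesis. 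I would instead use the identity $p_x\bar B_{\infty,x}^{\otimes m}(\vec k') = (\sum_j k_j')\bar B_{\infty,x}^{\otimes m}(\vec k')$ to transfer $x$-derivatives of $T_m\Phi^{(m+n_0+1)}$ onto $\Phi^{(m+n_0+1)}$ itself (where they are controlled by $\Phi\in D(p_x^2)$), absorbing the resulting powers of $k_j'$ into the weights $1/(1+(k_j')^2)$ built into $B_\infty$; the $D(H_0) = D(p_x^2)\cap D(N)$-regularity of $\Phi$ then controls the sum in $m$ via Plancherel in $\vec k'$ and the Fock-grading factorials.
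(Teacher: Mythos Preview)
Your approach via the BCH factorisation $U_\infty^* = e^{-\beta/2}e^{a^*(B_\infty)}e^{-a(B_\infty)}$ and a Fock-sector analysis is genuinely different from the paper's. The paper never expands in phonon number; it uses the invariance of $D(|p|^\sigma)\cap D(N^{\sigma/2})$ under $U_\infty$ (Lemma~\ref{lm:invariance-of-p-domains}) to reduce directly to showing $\phi(kB_\infty)\Psi\notin D(|p|^{s-1})$ for $\Psi\in D(H_0)\setminus\{0\}$, and then isolates the divergent piece $a^*(|k|_\eps^{s-1}kB_\infty)\Psi$, whose norm is bounded below by $\||k|_\eps^{s-1}kB_\infty\|\,\|\Psi\|\to\infty$.

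The genuine gap is your technical lemma, and the fix you propose does not work. Transferring one integer derivative via $p_x\bar B_\infty^{\otimes m}=(\sum_j k_j')\bar B_\infty^{\otimes m}$ does produce commutator terms with a single extra factor $(k'_{j_0})_l$, and this factor is absorbed since $kB_\infty\in L^2$ by~$(\f 2)$. But you need $|p_x|^s$ with $s>1$. After one $p_{x,l}$ you must still apply $|p_x|^{s-1}$ to terms of the form $a(k_lB_\infty)\Phi$; commuting $|p_x|^{s-1}$ through (as in the paper's own computation) generates $a(|k|^{s-1}k_lB_\infty)\Phi$, and the naive bound $\|a(f)\Phi\|\leq\|f\|\,\|\sqrt{N}\Phi\|$ then requires $\||k|^{s-1}kB_\infty\|^2\asymp\int|\f(k)|^2(1+k^2)^{s-2}\,dk$, which is \emph{infinite by hypothesis}. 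Equivalently: the weight $(1+k^2)^{-2}$ in $|B_\infty|^2$ absorbs two powers of $|k|$, not the $2s>2$ you need, so ``absorbing the resulting powers of $k_j'$ into the weights'' fails exactly here. (Transferring two integer derivatives instead fails for the same reason: the double-hit term carries $(k')^2B_\infty\notin L^2$.) The escape route is the Frank--Schlein bound of Lemma~\ref{lm:FrankSchlein}, which trades the missing $k$-decay for the extra $(1-\Delta_x)^{1/2}$-regularity of $\Phi\in D(H_0)$; this is the decisive analytic ingredient and it is absent from your sketch. A smaller issue: for $n_0\geq 1$, your lower bound on $\||p_x|^s(\text{coherent tail})\|$ by the diagonal contribution does not address the symmetrisation cross-terms, which are not obviously non-negative.
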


In the case of the Fr\"ohlich Hamiltonian where $v(k) =|k|^{-(d-1)/2}$, choose $s=3/2$ in Theorem~\ref{thm:domain-intersect}
to prove the Assertion \eqref{D-cap} in the Introduction.

\begin{proof}
Let $\Psi\in D(H_0)$, and suppose that $U_{\infty}^*\Psi\in D((-\Delta)^{s/2})=D(\abs{p}^s)$. Then, $U_{\infty}^*\Psi\in D(\abs{p}^s)\cap D(N)$ by Lemma~\ref{lm:UNUphi}.
In view of the inequality
\begin{align*}
   N^{(s-1)/2}\abs{p}\leq \frac{s-1}{s}N^{s/2}+\frac{1}{s}\abs{p}^s, 
\end{align*}
and the assumption $1<s\leq 2$, we conclude that $pU_{\infty}^*\Psi\in D(\abs{p}^{s-1})\cap D(N^{(s-1)/2})$. This implies, by Lemma~\ref{lm:invariance-of-p-domains}, that $U_{\infty}pU_{\infty}^*\Psi\in D(\abs{p}^{s-1})$, where
\begin{align*}
   U_{\infty}pU_{\infty}^*\Psi=\left(p-\phi(kB_{\infty})\right)\Psi.
\end{align*}
The first term on the right-hand side, $p\Psi$, belongs to $D(\abs{p}^{s-1})$ as well, because $s\leq 2$ and $\Psi\in D(p^2)$. We now compute $\norm{\abs{p}^{s-1}\phi(kB_{\infty})\Psi}$, and we show that this number is infinite unless $\Psi=0$. To this end, we define the functions
\begin{align*}
   \abs{p}_{\eps}^{s-1}&:=\frac{\abs{p}^{s-1}}{1+\eps\abs{p}^{s-1}}, \\
   D_{\eps}(p,k)&:=\abs{p+k}_{\eps}^{s-1}-\abs{k}_{\eps}^{s-1}
\end{align*}
for $p,k\in\R^d$ and $\eps>0$. Using that $(s-1)\in (0,1]$, it is straightforward to verify that
\begin{align}
\label{equ:inequalityForTheRegularizationFunctionForMomentumWithSomeExponent2}
   \abs{D_{\eps}(p,k)}\leq \abs{p}^{s-1}   
\end{align}
for all $p,k\in\R^d$ and $\eps>0$. For $p=-i\nabla_x$, we have 
$$
   \abs{p}_{\eps}^{s-1}e^{ikx}=e^{ikx}\abs{p+k}_{\eps}^{s-1}
$$
which, in view of \eqref{def:annihilation}, implies
\begin{align*}
   \abs{p}_{\eps}^{s-1}a(kB_{\infty})\Psi&=a(kB_{\infty}\abs{p}_{\eps}^{s-1})\Psi \notag \\
                                  &=a(\abs{k}_{\eps}^{s-1}kB_{\infty})\Psi+a(D_{\eps}(p,k)kB_{\infty})\Psi
\end{align*}
and
$$
   \abs{p}_{\eps}^{s-1}a^*(kB_{\infty})\Psi=a^*(kB_{\infty}\abs{k}_{\eps}^{s-1})\Psi+a^*(kB_{\infty}D_{\eps}(p,-k))\Psi.
$$
By Inequality~\eqref{equ:inequalityForTheRegularizationFunctionForMomentumWithSomeExponent2},
\begin{align}
   \label{equ:estimateOfTheTermWithTheAnnihilatorOfTheBadTerm}   \norm{a(D_{\eps}(p,k)kB_{\infty})\Psi} &\leq \norm{kB_{\infty}}\cdot\norm{\abs{p}^{s-1}\sqrt{N}\Psi}, \\
   \label{equ:estimateOfTheTermWithTheCreatorOfTheBadTerm}       \norm{a^*(kB_{\infty}D_{\eps}(p,-k))\Psi} &\leq \norm{kB_{\infty}}\cdot\norm{\abs{p}^{s-1}\sqrt{N+1}\Psi}, 
\end{align}
and by Lemma~\ref{lm:FrankSchlein},
\begin{align}
\label{equ:anotherAplicationOfTheFrankSchleinLemma}
   \norm{a(kB_{\infty}\abs{k}_{\eps}^{s-1})\Psi} \leq C_s\norm{\sqrt{N}(1+p^2)^{1/2}\Psi}, 
\end{align}
where
\begin{equation*}
   C_s:=\sup\limits_{q\in\R^d}\left(\int\frac{\abs{kB_{\infty}(k)}^2\abs{k}^{2(s-1)}}{1+(q-k)^2}dk\right)<\infty
\end{equation*}
because $s\leq 2$. Note that the bounds~\eqref{equ:estimateOfTheTermWithTheAnnihilatorOfTheBadTerm}, \eqref{equ:estimateOfTheTermWithTheCreatorOfTheBadTerm} and~\eqref{equ:anotherAplicationOfTheFrankSchleinLemma} are uniform in $\eps>0$. Therefore, there exists a constant $c$ such that
\begin{align*}
   \norm{\abs{p}^{s-1}\phi(kB_{\infty})\Psi}&=\limop[0]{\eps}\norm{\abs{p}_{\eps}^{s-1}\phi(kB_{\infty})\Psi} \\
                                          &\geq\liminfop[0]{\eps}\norm{a^*(kB_{\infty}\abs{k}_{\eps}^{s-1})\Psi}-c \\
                                          &\geq\limop[0]{\eps}\norm{kB_{\infty}\abs{k}_{\eps}^{s-1}}\cdot\norm{\Psi}-c
\end{align*}
which is infinite unless $\Psi=0$. This completes the proof.
\end{proof}

\begin{theorem}
\label{thm:domain-supset}
Assume $(\f 1)$. If $\int\abs{v(k)}^2(1+k^2)^{s-2}dk<\infty$ for some $s\in [1,2]$, then 
$$
   U_{\infty}^*D(H_0)\subset D((-\Delta)^{s/2}).
$$
\end{theorem}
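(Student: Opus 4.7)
Fix $\Psi \in D(H_0)$; the task is to show $U_\infty^*\Psi \in D(|p|^s)$. The boundary case $s=1$ is immediate from Lemma~\ref{lm:dressing-parts}(a) via $D(H_0)\subset D(H_0^{1/2})\subset D(|p|)$, so I assume $s \in (1,2]$. First, I would invoke Lemma~\ref{lm:dressing-parts}(a) (the formula $U_\infty p U_\infty^* = p - \phi(kB_\infty)$ extended to $D(H_0^{1/2})$ as in Equation~\eqref{eq:pU}) to rewrite $pU_\infty^*\Psi = U_\infty^*(p - \phi(kB_\infty))\Psi$. Then, showing $U_\infty^*\Psi \in D(|p|^s)$ is equivalent to placing this vector in $D(|p|^{s-1})$, and with $\sigma := s-1 \in [0,1]$ Lemma~\ref{lm:invariance-of-p-domains} reduces the matter to proving
\[
   (p - \phi(kB_\infty))\Psi \in D(|p|^{s-1}) \cap D(N^{(s-1)/2}).
\]

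For the summand $p\Psi$, the commuting nonnegative self-adjoint operators $|p|^2$ and $N$ satisfy $|p|^2 N \leq \tfrac{1}{4}(|p|^2 + N)^2$ via the joint spectral measure, hence $\||p|N^{1/2}\Psi\| \leq \tfrac{1}{2}\|H_0\Psi\|$, and so $p\Psi \in D(|p|)\cap D(N^{1/2}) \subset D(|p|^{s-1})\cap D(N^{(s-1)/2})$ because $s-1\leq 1$. For $\phi(kB_\infty)\Psi$, the containment in $D(N^{(s-1)/2})$ is a routine consequence of $\Psi \in D(N)\subset D(N^{s/2})$ and the standard bound $\|N^{(s-1)/2}a^{\#}(f)\Psi\| \leq C\|f\|\,\|N^{s/2}\Psi\|$ derived from Lemma~\ref{lm:a-like-root-N} and the commutation of $N$ with $a^{\#}(f)$.

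The substantial step is $\phi(kB_\infty)\Psi \in D(|p|^{s-1})$, for which the plan is to adapt the regularization argument from the proof of Theorem~\ref{thm:domain-intersect}. Setting $|p|^{s-1}_\eps := |p|^{s-1}/(1 + \eps |p|^{s-1})$ and using the pull-through $|p|^{s-1}_\eps e^{\pm ikx} = e^{\pm ikx}|p \pm k|^{s-1}_\eps$ together with the splitting $|p\pm k|^{s-1}_\eps = |k|^{s-1}_\eps + D_\eps(p,\pm k)$ and the $\eps$-uniform bound $|D_\eps(p,\pm k)| \leq |p|^{s-1}$, the vector $|p|^{s-1}_\eps\phi(kB_\infty)\Psi$ decomposes into four terms whose norms are controlled, uniformly in $\eps$, by combinations of $\||k|^{s-1}kB_\infty\|\,\|(N+1)^{1/2}\Psi\|$ and $\|kB_\infty\|\,\||p|^{s-1}(N+1)^{1/2}\Psi\|$. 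The first quantity is finite because $\||k|^{s-1}kB_\infty\|^2$ is dominated by $\int_{|k|\geq \ir}(1+k^2)^{s-2}|v(k)|^2\,dk$, which is finite by hypothesis; the second is finite since $|p|^{s-1}\leq 1+|p|$ (as $s\leq 2$), together with the commutativity of $|p|$ and $N$ and $\Psi\in D(H_0)$, gives $\||p|^{s-1}(N+1)^{1/2}\Psi\| \leq C(\|H_0\Psi\| + \|\Psi\|)$.

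Finally, sending $\eps\searrow 0$ and invoking monotone convergence for the spectral measure of $|p|$ will upgrade the $\eps$-uniform bound to $\phi(kB_\infty)\Psi \in D(|p|^{s-1})$, closing the argument. The main obstacle is precisely this last step: moving the nonlocal operator $|p|^{s-1}$ through the field-operator integral and producing bounds that are uniform in $\eps$ on all four pieces. The calculation is essentially the same as in the proof of Theorem~\ref{thm:domain-intersect}, with the reversed hypothesis on $v$ simply reversing the direction of the key integrability estimate, turning divergence into convergence.
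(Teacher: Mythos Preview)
Your proposal is correct and follows essentially the same route as the paper: reduce via Lemma~\ref{lm:dressing-parts}(a) to $p_jU_\infty^*\Psi = U_\infty^*\gamma_j$, invoke Lemma~\ref{lm:invariance-of-p-domains} with $\sigma=s-1$, and then verify $\gamma_j\in D(|p|^{s-1})\cap D(N^{(s-1)/2})$ by reusing the regularization estimates from the proof of Theorem~\ref{thm:domain-intersect}, where the integrability hypothesis now makes $\||k|^{s-1}kB_\infty\|$ finite. The only cosmetic differences are that the paper works componentwise in $j$ throughout and, for the annihilation piece, cites the Frank--Schlein bound from the earlier proof rather than the simpler $\|a(f)\Psi\|\le\|f\|\|\sqrt{N}\Psi\|$ you use; your version is slightly more self-contained.
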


In the case of the Fr\"ohlich Hamiltonian where $v(k) = |k|^{-(d-1)/2}$, the
assumption of Theorem~\ref{thm:domain-supset}  is satisfied for all
$s\in [1,3/2)$ and $U_{\infty}^*D(H_0)=D(H)$ by Theorem~\ref{thm:domain-H}. This proves Assertion~\eqref{D-supset} in the
Introduction.

\begin{proof}
Let $\Psi\in D(H_0)$. From Lemma~\ref{lm:dressing-parts} we know that $U_{\infty}^*\Psi\in D(H_0^{1/2})\subset D(\abs{p})$ and that $p_jU_{\infty}^*\Psi=U_{\infty}^*(p_j-\phi(k_jB_{\infty}))\Psi=:U_{\infty}^*\gamma_j$. It follows that
$$
   \norm{\abs{p}^sU_{\infty}^*\Psi}^2=\sum\limits_{j=1}^d\norm{\abs{p}^{s-1}p_jU_{\infty}^*\Psi}^2=\sum\limits_{j=1}^d\norm{\abs{p}^{s-1}U_{\infty}^*\gamma_j}^2,
$$
which is finite (and thus proves the theorem) provided we can show that $U_{\infty}^*\gamma_j\in D(|p|^{s-1})$ for all $j$.
To prove this, it suffices, by Lemma~\ref{lm:invariance-of-p-domains},  to show that $\gamma_j\in D(|p|^{s-1})\cap D(N^{(s-1)/2})$. From $\gamma_j\in D(\sqrt{N})$ it follows that $\gamma_j\in D(N^{(s-1)/2})$ because $s\in [1,2]$. It remains to show that $\||p|^{s-1}\gamma_j\|<\infty$.
The first term of $\gamma_j=p_j\Psi-\phi(k_jB_{\infty})\Psi$ belongs to $D(\abs{p}^{s-1})$ because $s\leq 2$ and because $\Psi\in D(-\Delta)$. To prove that $\|\abs{p}^{s-1}\phi(k_jB_{\infty})\Psi\|$ is finite, we recall the estimates in the proof of Theorem~\ref{theorem:generalizationOfTheNonInvarianceOfTheDomainWithoutCutoff} which imply that $\|\abs{p}^{s-1} a(k_jB_{\infty})\Psi\|$ is finite and that
\begin{align}
\label{eq:domain-subset} 
   \norm{\abs{p}^{s-1} a^*(k_jB_{\infty})\Psi}&\leq \norm{a^*(k_jB_{\infty}\abs{k}^{s-1})\Psi}+\norm{a^*(k_jB_{\infty}D_{\eps}(p,-k))\Psi} \notag \\
                                           &\leq C_s\left(\norm{\sqrt{N+1}\Psi}+\norm{\sqrt{N+1}\abs{p}^{s-1}\Psi}\right),
\end{align}
where
$$
   C_s^2:=\int\abs{k}^{2s}\frac{\abs{v(k)}^2}{(1+k^2)^2}dk<\infty.
$$
In the last inequality, we used the hypothesis on $v$, and in~\eqref{eq:domain-subset}, we used $s\geq 1$.
\end{proof}


\appendix
\section{Quadratic forms and resolvent convergence}   
\label{app:abstract}

The following theorem is our main tool for the proof of
Theorem~\ref{thm:nrs}. It is essentially due to Nelson~\cite{Nelson1964}. A
similar theorem, without proof, is given in the Appendix of \cite{AmmariZied2000}. 

\begin{theorem}
\label{thm:Ammari}
Let $H_0\geq 0$ be a self-adjoint operator in $\hilbert$ and let $\norm{\Psi}_0:=\norm{(H_0+1)^{1/2}\Psi}$ for $\Psi\in D(H_0^{1/2})$. For each $\cutoff<\infty$, let $W_{\cutoff}$ be a quadratic form defined on $D(H_0^{1/2})$ such that
\begin{itemize}
\item[(a)] for all $\Psi\in D(H_0^{1/2})$ and all $\cutoff<\infty$,
$$
   \abs{W_{\cutoff}(\Psi)}\leq a\norm{\Psi}_0^2+b_{\cutoff}\norm{\Psi}^2,
$$
where $a<1$,
\item[(b)] for all $\Psi\in D(H_0^{1/2})$,
$$
      \abs{W_{\cutoff}(\Psi)-W_{\cutoff'}(\Psi)}\leq C_{\cutoff,\cutoff'}\norm{\Psi}_0^2,
$$
where $C_{\cutoff,\cutoff'}\rightarrow 0$ as $\cutoff,\cutoff'\rightarrow\infty$.
\end{itemize}
Let $W_{\infty}(\Phi,\Psi):=\limop{\cutoff}W_{\cutoff}(\Phi,\Psi)$. Then, (a) extends to $\cutoff=\infty$ with some finite $b_{\infty}$, and for each $\cutoff\leq\infty$, there exists a self-adjoint, semibounded operator $H_{\cutoff}$ with $D(H_{\cutoff})\subset D(H_0^{1/2})$ and 
\begin{align}
\label{equ:AmariLikeTheoremStatement1} 
      \sprod{\Phi}{H_{\cutoff}\Psi}=\sprod{H_0^{1/2}\Phi}{H_0^{1/2}\Psi}+W_{\cutoff}(\Phi,\Psi)
\end{align}
for all $\Phi\in D(H_0^{1/2})$ and $\Psi\in D(H_{\cutoff})$. Moreover, for all $z\in\C\backslash\R$,
\begin{align*}
      (H_{\cutoff}-z)^{-1} \longrightarrow (H_{\infty}-z)^{-1}\qquad (\Lambda\to\infty)
\end{align*}
in the operator norm.
\end{theorem}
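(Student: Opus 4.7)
The plan is to use the KLMN theorem to construct each $H_\cutoff$ from the form $\norm{H_0^{1/2}\cdot}^2 + W_\cutoff(\cdot)$, establish a uniform-in-$\cutoff$ smoothing bound $\norm{(H_0+1)^{1/2}(H_\cutoff-z)^{-1}} \leq M(z)$, and then combine this with hypothesis (b) via the form-theoretic second resolvent identity to obtain norm convergence. First, by (b), $\cutoff \mapsto W_\cutoff(\Psi)$ is Cauchy for every $\Psi \in D(H_0^{1/2})$, so $W_\infty(\Psi) := \limop{\cutoff} W_\cutoff(\Psi)$ exists, and polarization extends $W_\infty$ to a symmetric sesquilinear form. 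Passing to the limit $\cutoff' \to \infty$ in (b) yields $|W_\cutoff(\Psi) - W_\infty(\Psi)| \leq C_{\cutoff,\infty}\norm{\Psi}_0^2$ with $C_{\cutoff,\infty} := \limsup_{\cutoff' \to \infty} C_{\cutoff,\cutoff'} \to 0$. Fixing $\cutoff_0$ large enough that $C_{\cutoff,\cutoff_0} < (1-a)/2$ for all $\cutoff \geq \cutoff_0$, I obtain the uniform relative form bound
\[ |W_\cutoff(\Psi)| \leq a' \norm{\Psi}_0^2 + b' \norm{\Psi}^2, \qquad a' := (1+a)/2 < 1, \]
valid for all $\cutoff \in [\cutoff_0,\infty]$, which in particular establishes (a) at $\cutoff = \infty$ with some $b_\infty$.

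By the KLMN theorem, each form $q_\cutoff(\Psi) := \norm{H_0^{1/2}\Psi}^2 + W_\cutoff(\Psi)$ on $D(H_0^{1/2})$ is then closed and uniformly bounded below, and is represented by a unique self-adjoint operator $H_\cutoff$ satisfying~\eqref{equ:AmariLikeTheoremStatement1}. For the smoothing estimate, fix $z \in \C\backslash\R$, set $\Psi := (H_\cutoff - z)^{-1}\Phi$, and use $\norm{(H_0+1)^{1/2}\Psi}^2 = q_\cutoff(\Psi) - W_\cutoff(\Psi) + \norm{\Psi}^2$ combined with $q_\cutoff(\Psi) = \Rea\sprod{\Psi}{\Phi} + \Rea(z)\norm{\Psi}^2$, the uniform form bound just established, and $\norm{\Psi} \leq \norm{\Phi}/|\Ima z|$ to derive
\[ \sup_{\cutoff \in [\cutoff_0,\infty]} \norm{(H_0+1)^{1/2}(H_\cutoff - z)^{-1}} \leq M(z) < \infty. \]

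Finally, set $\alpha := (H_\cutoff - \bar z)^{-1}\Phi$ and $\beta := (H_\infty - z)^{-1}\Psi$; both lie in $D(H_0^{1/2})$ by construction, and the form version of the second resolvent identity reads $\sprod{\Phi}{[(H_\cutoff - z)^{-1} - (H_\infty - z)^{-1}]\Psi} = W_\infty(\alpha,\beta) - W_\cutoff(\alpha,\beta)$. The polarized off-diagonal version of (b) bounds this by $C_{\cutoff,\infty}\norm{\alpha}_0\norm{\beta}_0 \leq C_{\cutoff,\infty} M(z)^2 \norm{\Phi}\norm{\Psi}$, which yields $\norm{(H_\cutoff - z)^{-1} - (H_\infty - z)^{-1}} \leq M(z)^2 C_{\cutoff,\infty} \to 0$ and completes the proof. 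The main obstacle I anticipate is precisely the uniform smoothing estimate: hypothesis (a) alone provides only $\cutoff$-dependent constants $b_\cutoff$, so one must first exploit (b) to promote (a) to a relative form bound that is uniform in $\cutoff$ and extends to $\cutoff = \infty$ before both the KLMN construction and the resolvent bounds become uniform in $\cutoff$.
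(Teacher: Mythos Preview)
Your proof is correct and follows essentially the same route as the paper: use (b) together with (a) at a fixed $\cutoff_0$ to obtain a relative form bound with constant $(1+a)/2<1$ uniform in $\cutoff\in[\cutoff_0,\infty]$, invoke KLMN to construct $H_\cutoff$, derive the uniform bound $\norm{(H_0+1)^{1/2}(H_\cutoff-z)^{-1}}\leq M(z)$, and conclude via the form second resolvent identity combined with the polarized version of (b). The only cosmetic difference is that the paper states the smoothing estimate as the form inequality $H_0\leq \tfrac{2}{1-a}(H_\cutoff+M)$ rather than deriving $\norm{\Psi}_0$ by hand, but this is the same content.
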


\begin{proof}
Choose $\cutoff_0>0$ so large, that $C_{\cutoff,\cutoff'}\leq (1-a)/2$ for all $\cutoff,\cutoff'\geq\cutoff_0$. Then, for $\cutoff\geq\cutoff_0$,
\begin{align}
   \abs{W_{\cutoff}(\Psi)} &\leq \abs{W_{\cutoff}(\Psi)-W_{\cutoff_0}(\Psi)}+\abs{W_{\cutoff_0}(\Psi)} \notag \\
                         &\leq C_{\cutoff,\cutoff_0}\norm{\Psi}_0^2+a\norm{\Psi}_0^2+b_{\cutoff_0}\norm{\Psi}^2 \notag \\
   \label{equ:AmariLikeTheoremProofFirstEstimate} &\leq \frac{1}{2}(1+a)\norm{\Psi}_0^2+b_{\cutoff_0}\norm{\Psi}^2.
\end{align}
In the limit $\cutoff\rightarrow\infty$, it follows that
\begin{align}
\label{equ:AmariLikeTheoremProofFollowingOfTheFirstEstimate} 
   \abs{W_{\infty}(\Psi)} \leq \frac{1}{2}(1+a)\norm{\Psi}_0^2+b_{\cutoff_0}\norm{\Psi}^2.
\end{align}
From Assumption~(a) and from~\eqref{equ:AmariLikeTheoremProofFollowingOfTheFirstEstimate}, it follows that for each $\cutoff\leq\infty$, the quadratic form
\begin{align*}
   \sprod{H_0^{1/2}\Phi}{H_0^{1/2}\Psi}+W_{\cutoff}(\Phi,\Psi)
\end{align*}
with $\Phi,\Psi\in D(H_0^{1/2})$ is closed, bounded from below and hence associated with a unique self-adjoint operator $H_{\cutoff}$ such that~\eqref{equ:AmariLikeTheoremStatement1} holds (see~\cite{KatoPerturbation1976}). The Inequalities~\eqref{equ:AmariLikeTheoremProofFirstEstimate} and~\eqref{equ:AmariLikeTheoremProofFollowingOfTheFirstEstimate} imply that
\begin{align}
\label{equ:AmariLikeTheoremProofImplicationOfTheTwoFirstInequalities} 
   H_0\leq\frac{2}{1-a}(H_{\cutoff}+M), \quad \cutoff_0\leq\cutoff\leq\infty,
\end{align}
where $M:=b_{\cutoff_0}+1$. By Assumption~(b), $C_{\cutoff}:=\limsupop{\cutoff'} C_{\cutoff,\cutoff'}\rightarrow 0$ as $\cutoff\rightarrow\infty$ and
\begin{align}
\label{equ:AmariLikeTheoremProofImplicationOfTheSecondAssumption} 
   \abs{W_{\cutoff}(\Psi)-W_{\infty}(\Psi)} = \limop{\cutoff'}\abs{W_{\cutoff}(\Psi)-W_{\cutoff'}(\Psi)}\leq C_{\cutoff}\norm{\Psi}_0^2.
\end{align}
Using~\eqref{equ:AmariLikeTheoremProofImplicationOfTheSecondAssumption}, we conclude that for all $\cutoff\geq\cutoff_0$ and all $\Phi,\Psi\in\hilbert$,
\begin{align*}
   &\abs{\sprod{\Phi}{\left(R_{\cutoff}(z)-R_{\infty}(z)\right)\Psi}} \notag \\
                                                      &= \abs{\sprod{R_{\cutoff}(\overline{z})\Phi}{(H_{\infty}-z)R_{\infty}(z)\Psi}-\sprod{(H_{\cutoff}-\overline{z})R_{\cutoff}(\overline{z})\Phi}{R_{\infty}(z)\Psi}} \notag \\
                                                      &= \abs{W_{\infty}(R_{\cutoff}(\overline{z})\Phi,R_{\infty}(z)\Psi)-W_{\cutoff}(R_{\cutoff}(\overline{z})\Phi,R_{\infty}(z)\Psi)} \notag \\
                                                     &\leq C_{\cutoff}\norm{R_{\cutoff}(\overline{z})\Phi}_0\norm{R_{\infty}(z)\Psi}_0,
\end{align*}
where by~\eqref{equ:AmariLikeTheoremProofImplicationOfTheTwoFirstInequalities}, $\norm{R_{\cutoff}(\overline{z})\Phi}_0\leq C_z\norm{\Phi}$ and $\norm{R_{\infty}(z)\Psi}_0\leq C_z\norm{\Psi}$ with $C_z$ independent of $\cutoff$ for $\cutoff_0\leq\cutoff\leq\infty$.
\end{proof}

\section{Creation and annihilation operators}   
\label{app:creation}

Let $\fock=\bigoplus\limits_{n\geq 0}\fock_n$ be the symmetric Fock space over some Hilbert space $\hh$, let $\hilbert=L^2(\R^d)\otimes\fock$, and let $\hilbert_0=\bigcup\limits_{n\geq 0}\chi(N\leq n)\hilbert$. Suppose
\begin{align*}
   B: L^2(\R^d)\rightarrow L^2(\R^d)\otimes\hh
\end{align*}
is a bounded linear operator. Then, we define the operator $a^*(B)$ in $\hilbert$ on vectors $\Psi=(\Psi^{(n)})_{n\geq 0}\in\hilbert$ by
\begin{align}
\label{def:creation}
   a^*(B)\Psi^{(n)}=\sqrt{n+1}\ S_{n+1}(B\otimes\ID)\Psi^{(n)},
\end{align}
where $S_{n+1}$ denotes the orthogonal projection from $\bigotimes^{n+1}\hh$ onto $\fock_{n+1}$. The annihilation operator $a(B)$ is defined on $\hilbert_0$ by $\sprod{a(B)\Phi}{\Psi}=\sprod{\Phi}{a^*(B)\Psi}$ for all $\Phi,\Psi\in\hilbert_0$. One easily verifies that $a(B)=0$ on $L^2(\R^d)\otimes\fock_{n=0}$ and that
\begin{align}
\label{def:annihilation}
   a(B)\Psi^{(n)}=\sqrt{n}\ (B^*\otimes\ID)\Psi^{(n)}.
\end{align}
Since $\hilbert_0$ is dense in $\hilbert$, it follows that both $a(B)$ and $a^*(B)$ are closable, and we denote the closures by $a(B)$ and $a^*(B)$ as well. It is straightforward to show that $a^*(B)$ is the adjoint of $a(B)$, see for example~\cite{BratteliRobinsonII}.

The following lemma easily follows from~\eqref{def:creation} and~\eqref{def:annihilation}:
\begin{lemma}
\label{lm:a-like-root-N}
Let $B:L^2(\R^d)\to L^2(\R^d)\otimes \hh$ be a bounded linear operator with norm $\norm{B}$. Then, $D(a^{\#}(B))\supset D(\sqrt{N})$, and for all $\Psi\in D(\sqrt{N})$,
\begin{align*}
      \norm{a(B)\Psi}   &\leq \norm{B}\ \norm{\sqrt{N}\ \Psi}, \\
      \norm{a^*(B)\Psi} &\leq \norm{B}\ \norm{\sqrt{N+1}\ \Psi}.
\end{align*}
\end{lemma}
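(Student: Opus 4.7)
The plan is to read off both inequalities directly from the sector-wise formulas \eqref{def:creation} and \eqref{def:annihilation}, and then extend by closure from the finite particle subspace $\hilbert_0$ to $D(\sqrt{N})$.

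First I would work on $\Psi\in\hilbert_0$, where only finitely many components $\Psi^{(n)}$ are nonzero, so no convergence issue arises. For the creation operator, \eqref{def:creation} gives $a^*(B)\Psi^{(n)}=\sqrt{n+1}\,S_{n+1}(B\otimes\ID)\Psi^{(n)}$. Since $S_{n+1}$ is an orthogonal projection and $\|B\otimes\ID\|=\|B\|$, the sector-wise bound is $\|a^*(B)\Psi^{(n)}\|\leq\sqrt{n+1}\,\|B\|\,\|\Psi^{(n)}\|$. Because the vectors $a^*(B)\Psi^{(n)}$ lie in mutually orthogonal sectors $\fock_{n+1}$, one obtains
\begin{equation*}
   \norm{a^*(B)\Psi}^2=\sum_{n\geq 0}\norm{a^*(B)\Psi^{(n)}}^2\leq\norm{B}^2\sum_{n\geq 0}(n+1)\norm{\Psi^{(n)}}^2=\norm{B}^2\,\norm{\sqrt{N+1}\,\Psi}^2.
\end{equation*}
For the annihilation operator, \eqref{def:annihilation} together with $\|B^*\otimes\ID\|=\|B\|$ yields $\|a(B)\Psi^{(n)}\|\leq\sqrt{n}\,\|B\|\,\|\Psi^{(n)}\|$, and summing over the (again orthogonal) target sectors $\fock_{n-1}$ gives $\|a(B)\Psi\|\leq\|B\|\,\|\sqrt{N}\,\Psi\|$.

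Finally I would upgrade both statements from $\hilbert_0$ to $D(\sqrt{N})$ by a density/closure argument. If $\Psi\in D(\sqrt{N})$, then the truncations $\Psi_m:=\chi(N\leq m)\Psi$ lie in $\hilbert_0$, satisfy $\Psi_m\to\Psi$, and $\sqrt{N}\,\Psi_m\to\sqrt{N}\,\Psi$; by the bounds just established, $(a^{\#}(B)\Psi_m)_{m}$ is Cauchy, and since $a^{\#}(B)$ is closed (as recalled before the lemma), $\Psi\in D(a^{\#}(B))$ with the same inequality passing to the limit. This shows $D(\sqrt{N})\subset D(a^{\#}(B))$ and extends both norm bounds to all $\Psi\in D(\sqrt{N})$.

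There is no real obstacle here: the only thing to be careful about is not to confuse the operator norms of $B\otimes\ID$ and $B^*\otimes\ID$ on $\bigotimes^{n+1}\hh$ with some smaller norm after projecting by $S_{n+1}$, but since $S_{n+1}$ is a contraction, the naive estimate $\sqrt{n+1}\,\|B\|$ is already sufficient, and orthogonality of the Fock sectors converts the sector-wise $\sqrt{n+1}$ into the $\sqrt{N+1}$ factor.
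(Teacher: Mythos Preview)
Your proof is correct and is exactly the argument the paper has in mind: the paper does not spell out a proof but simply states that the lemma ``easily follows from \eqref{def:creation} and \eqref{def:annihilation}'', which is precisely your sector-wise estimate using $\|S_{n+1}\|\leq 1$ and $\|B\otimes\ID\|=\|B^*\otimes\ID\|=\|B\|$, followed by the routine closure extension from $\hilbert_0$ to $D(\sqrt{N})$.
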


Creation and annihilation operators $a^*(f)$ and $a(f)$ for $f\in\hh$ are defined in terms of the linear operator from $L^2(\R^d)$ to $L^2(\R^d)\otimes\hh$ which maps $\Psi$ to $\Psi\otimes f$. The norm of this operator is $\norm{f}$. Lemma~\ref{lm:a-like-root-N}, therefore, implies that for all $f\in\hh$ and all $\Psi\in D(\sqrt{N})$,
\begin{align*}
     \norm{a(f)\Psi}   &\leq \norm{f}\ \norm{\sqrt{N}\ \Psi}, \\
     \norm{a^*(f)\Psi} &\leq \norm{f}\ \norm{\sqrt{N+1}\ \Psi}.
\end{align*}
From these Estimates and from the pull-through formulas $a(f)N=(N+1)a(f)$ and $Na^*(f)=a^*(f)(N+1)$ the
next two lemmas follow easily.
\begin{lemma}
\label{lemma:standardEstimateForSquaredSegalOperator}
For all $f,g\in L^2(\R^d)$,
\begin{align*} 
   \norm{a^{\#}(f)a^{\#}(g)(N+1)^{-1}} &\leq \sqrt{2}\norm{f}\
   \norm{g},\\
    \norm{\phi(f)^2(N+1)^{-1}}  &\leq 4\sqrt{2}\norm{f}^2,
\end{align*}
where $a^{\#}$ stands for $a$ or $a^*$.
\end{lemma}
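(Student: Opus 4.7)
The plan is to reduce both inequalities to a two-fold application of Lemma~\ref{lm:a-like-root-N}, exploiting the pull-through identities $a(h)N=(N+1)a(h)$ and $Na^{*}(h)=a^{*}(h)(N+1)$ recalled just before that lemma. By functional calculus these extend to $\sqrt{N+1}\,a^{*}(h)=a^{*}(h)\sqrt{N+2}$, $\sqrt{N}\,a^{*}(h)=a^{*}(h)\sqrt{N+1}$, and $\sqrt{N+1}\,a(h)=a(h)\sqrt{N}$, which is precisely what is needed to commute a power of $N+1$ past a single creation or annihilation factor.

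For the first inequality I would treat the four choices of $\#\#$ separately. The worst case is $a^{*}(f)a^{*}(g)$. A first application of Lemma~\ref{lm:a-like-root-N} gives $\norm{a^{*}(f)a^{*}(g)(N+1)^{-1}\Psi}\leq \norm{f}\cdot\norm{\sqrt{N+1}\,a^{*}(g)(N+1)^{-1}\Psi}$. The pull-through identity rewrites $\sqrt{N+1}\,a^{*}(g)$ as $a^{*}(g)\sqrt{N+2}$, and a second application of Lemma~\ref{lm:a-like-root-N} bounds the right-hand side by $\norm{f}\norm{g}\cdot\norm{\sqrt{(N+1)(N+2)}\,(N+1)^{-1}\Psi}$; the operator in this last norm equals $\sqrt{(N+2)/(N+1)}\leq\sqrt{2}$, which produces the stated factor. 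The three remaining cases $a(f)a(g)$, $a(f)a^{*}(g)$ and $a^{*}(f)a(g)$ are handled by exactly the same two-step scheme and in fact yield the stronger bound $\norm{f}\norm{g}\,\norm{\Psi}$ (for example, in the $a(f)a(g)$ case, the analogous computation reduces to the operator $\sqrt{N(N-1)}(N+1)^{-1}\leq 1$). Hence the uniform constant $\sqrt{2}$ suffices for every choice of $\#$.

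For the second inequality I would expand $\phi(f)^{2}=a(f)^{2}+a(f)a^{*}(f)+a^{*}(f)a(f)+a^{*}(f)^{2}$, apply the first inequality to each of the four summands with $g=f$, and conclude by the triangle inequality, which multiplies $\sqrt{2}\,\norm{f}^{2}$ by $4$. No real obstacle is anticipated: the main subtlety is a domain issue, handled by the standard observation that the estimates need only be verified on the dense subspace $\finitepart$, where all the operators involved act without ambiguity, and then extended by closure to arbitrary $\Psi\in\hilbert$.
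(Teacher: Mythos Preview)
Your proposal is correct and follows exactly the approach indicated in the paper: the paper does not spell out a proof but merely states that the lemma follows from Lemma~\ref{lm:a-like-root-N} together with the pull-through formulas $a(f)N=(N+1)a(f)$ and $Na^{*}(f)=a^{*}(f)(N+1)$, which is precisely the two-step scheme you carry out. Your identification of $a^{*}(f)a^{*}(g)$ as the case producing the $\sqrt{2}$ and the four-term expansion of $\phi(f)^{2}$ for the second bound are the intended details.
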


\begin{lemma}\label{lm:UNUphi}
Let $f,g\in L^2(\R^d)$. Then, the domains of $\phi(g)$ and $N^{\sigma}$, for $\sigma\in [0,1]$,
are left invariant by $e^{-i\pi(f)} $ and
\begin{align*}
  e^{i\pi(f)} \phi(g) e^{-i\pi(f)} &=  \phi(g) + 2\Rea\sprod{f}{g}\quad \text{on}\ D(\phi(g)), \\   
 e^{i\pi(f)} N e^{-i\pi(f)} &= N + \phi(f) +\|f\|^2\quad \text{on}\
 D(N).
\end{align*}
\end{lemma}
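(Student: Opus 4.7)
The plan is to prove both conjugation identities by the same differentiation scheme applied to $F_X(t) := e^{it\pi(f)} X e^{-it\pi(f)}$ for $X \in \{\phi(g), N\}$, then extend from a dense invariant core to the full domain by closure. The key inputs are the scalar commutator $[\pi(f),\phi(g)] = -2i\Rea\sprod{f}{g}$ stated in the text, and $[\pi(f),N] = -i\phi(f)$, which follows on $\finitepart$ from $\pi(f) = i(a^{*}(f)-a(f))$ together with $[N,a^{\#}(f)] = \pm a^{\#}(f)$.

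For the first identity I would take $\Psi$ in a dense subspace of analytic vectors for $\pi(f)$ inside $\finitepart$ on which the exponential series for $e^{-it\pi(f)}$ converges strongly and all commutator manipulations are literal. On such $\Psi$, differentiation then gives
\begin{align*}
   \frac{d}{dt} F_{\phi(g)}(t)\Psi
   = ie^{it\pi(f)}[\pi(f),\phi(g)]e^{-it\pi(f)}\Psi
   = 2\Rea\sprod{f}{g}\,\Psi,
\end{align*}
since the commutator is scalar; integrating from $0$ to $t$ yields $F_{\phi(g)}(t)\Psi = (\phi(g) + 2t\Rea\sprod{f}{g})\Psi$, and the case $t=1$ is the stated identity on the core. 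Because $2\Rea\sprod{f}{g}$ is a bounded scalar and $e^{\pm i\pi(f)}$ is unitary, the identity extends by closure to $D(\phi(g))$ and simultaneously yields $e^{-i\pi(f)} D(\phi(g)) \subseteq D(\phi(g))$; the reverse inclusion follows by replacing $f$ with $-f$.

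For the second identity the same scheme gives
\begin{align*}
   \frac{d}{dt} F_N(t)\Psi
   = ie^{it\pi(f)}(-i\phi(f))e^{-it\pi(f)}\Psi
   = (\phi(f) + 2t\|f\|^2)\Psi,
\end{align*}
where the last equality applies the just-proved identity with $g=f$ at parameter $t$; integrating from $0$ to $1$ yields $F_N(1)\Psi = (N + \phi(f) + \|f\|^2)\Psi$ on the core. Since $\phi(f)$ is $(N+1)^{1/2}$-bounded by Lemma~\ref{lm:a-like-root-N}, the right-hand side is relatively $N$-bounded, so the identity and the invariance of $D(N)$ extend by closure. For $\sigma \in (0,1)$, the invariance of $D(N^{\sigma})$ then follows by complex interpolation of the unitary $e^{-i\pi(f)}$ between the endpoints $D(N^0) = \hilbert$ and $D(N^1)$, using the standard fact that the complex interpolation scale between these spaces coincides with the scale $D(N^{\sigma})$ of the positive self-adjoint operator $N$.

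The only delicate step I expect is the rigorous justification of strong differentiability of $F_X(t)\Psi$ on a dense invariant subspace, since a priori we do not know that $e^{-it\pi(f)}\Psi$ stays in $D(\phi(g))$ or $D(N)$. A clean alternative that sidesteps this issue is to derive the Weyl relation $e^{i\pi(f)}e^{it\phi(g)}e^{-i\pi(f)} = e^{it(\phi(g)+2\Rea\sprod{f}{g})}$ directly from the scalar CCR $[\phi(h_1),\phi(h_2)] = 2i\Ima\sprod{h_1}{h_2}$ via Baker--Campbell--Hausdorff, and then read off the first identity at the level of generators by Stone's theorem; the $N$ identity is recovered by the same integration argument as above.
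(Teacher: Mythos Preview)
Your proposal is correct and follows essentially the same approach as the paper, which simply cites Proposition~5.2.4 of Bratteli--Robinson for the first identity, notes that the method generalizes to the $N$-identity, and invokes the Hadamard three-lines theorem for the interpolation of $D(N^{\sigma})$. Your differentiation-on-a-core argument and your Weyl-relation alternative are exactly the two standard proofs of the Bratteli--Robinson proposition, and your complex interpolation for $D(N^{\sigma})$ is precisely the Hadamard three-lines argument the paper names.
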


\begin{proof}
For the first equation including the statement on the domain of
$\phi(g)$, see Proposition~5.2.4 of \cite{BratteliRobinsonII}. The
method of proof of this proposition in \cite{BratteliRobinsonII} can be
generalized to prove the invariance of $D(N)$ and the second equation. The invariance of $D(N^\sigma)$ for $\sigma\in (0,1)$ now follows by a simple interpolation argument based on the Hadamard three-lines theorem. 
\end{proof}

\begin{lemma}
\label{lm:strongWeyl}
Let $f,g\in L^2(\R^d)$. Then
\begin{align}
\label{equ:standardEstimateForDifferenceOfWeylOperators} 
   \norm{(e^{i\pi(f)}-e^{i\pi(g)})(N+1)^{-1/2}}\leq 2\norm{f-g}+\abs{\Ima\sprod{f}{g}}.  
\end{align}
\end{lemma}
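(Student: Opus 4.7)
The plan is to reduce the difference of two Weyl operators to a single Weyl operator applied to the displacement $h := f-g$, thereby isolating the only place where the Lemma~\ref{lm:a-like-root-N} bound is needed. Using the commutation relation $[\pi(f),\pi(g)] = 2i\Ima\sprod{f}{g}$ (which follows from $[\phi(f),\pi(g)] = 2i\Rea\sprod{f}{g}$ and $\pi(f)=\phi(if)$) and the Baker--Campbell--Hausdorff identity for central commutators, I would establish the Weyl-type identity
$$
    e^{-i\pi(g)} e^{i\pi(f)} = e^{-i\Ima\sprod{f}{g}}\, e^{i\pi(f-g)}
$$
on the common invariant dense domain $\finitepart$, then extend by continuity. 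Factoring out a unitary on the left,
$$
    e^{i\pi(f)} - e^{i\pi(g)} = e^{i\pi(g)}\Big(e^{-i\Ima\sprod{f}{g}} e^{i\pi(h)} - 1\Big),
$$
and writing $e^{-i\Ima\sprod{f}{g}} e^{i\pi(h)} - 1 = e^{-i\Ima\sprod{f}{g}}(e^{i\pi(h)}-1) + (e^{-i\Ima\sprod{f}{g}} - 1)$, the triangle inequality together with the unitarity of $e^{i\pi(g)}$ and $e^{i\pi(h)}$ yields
$$
   \norm{(e^{i\pi(f)}-e^{i\pi(g)})(N+1)^{-1/2}} \leq \norm{(e^{i\pi(h)}-1)(N+1)^{-1/2}} + \abs{\Ima\sprod{f}{g}},
$$
where the phase contribution was estimated by $\abs{e^{i\theta}-1}\leq\abs{\theta}$.

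It remains to show that $\norm{(e^{i\pi(h)}-1)(N+1)^{-1/2}}\leq 2\norm{h}$. For $\Psi\in D(\sqrt{N})\subset D(\pi(h))$, Stone's theorem gives
$$
    (e^{i\pi(h)}-1)\Psi = i\int_0^1 \pi(h)\, e^{it\pi(h)}\Psi\,dt.
$$
Because $\pi(h)$ commutes (in the strong sense) with its own unitary group, $\norm{\pi(h) e^{it\pi(h)}\Psi}=\norm{\pi(h)\Psi}$ for every $t$, so
$$
    \norm{(e^{i\pi(h)}-1)\Psi} \leq \norm{\pi(h)\Psi} \leq 2\norm{h}\, \norm{\sqrt{N+1}\,\Psi}
$$
by Lemma~\ref{lm:a-like-root-N} applied to $a(ih)$ and $a^{*}(ih)$. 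Replacing $\Psi$ by $(N+1)^{-1/2}\Psi$ (which lies in $D(\sqrt N)$) and using density of $D(\sqrt N)$ in $\hilbert$ gives the desired bound.

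The only real subtlety will be justifying the Weyl relation above rigorously, since $\pi(f)$ and $\pi(g)$ are unbounded. I would handle this either by invoking the standard Weyl form (see, e.g., Bratteli--Robinson II, Proposition 5.2.4, already cited for Lemma~\ref{lm:UNUphi}), or by differentiating $t\mapsto e^{-it\pi(g)}e^{it\pi(f)}$ on $\finitepart$ and checking that both sides satisfy the same ODE with the same initial data after the phase correction is pulled out. Every other step is either elementary algebra, a standard estimate from Lemma~\ref{lm:a-like-root-N}, or a Bochner-integral argument that is entirely routine once the commutation of $\pi(h)$ with $e^{it\pi(h)}$ is observed.
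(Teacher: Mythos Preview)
Your argument is correct, and the overall strategy---reduce to a single displacement and then use the Duhamel formula together with Lemma~\ref{lm:a-like-root-N}---is closely related to but organized differently from the paper's proof. The paper does not invoke the Weyl relation $e^{-i\pi(g)}e^{i\pi(f)}=e^{-i\Ima\sprod{f}{g}}e^{i\pi(f-g)}$ at all; instead it interpolates directly via $t\mapsto e^{-it\pi(g)}e^{it\pi(f)}$, writes
\[
   (e^{i\pi(f)}-e^{i\pi(g)})\Psi = i\int_0^1 e^{-it\pi(g)}\,\pi(f-g)\,e^{it\pi(f)}\Psi\,dt,
\]
and then uses the conjugation formula of Lemma~\ref{lm:UNUphi} inside the integral to turn $e^{-it\pi(f)}\pi(f-g)e^{it\pi(f)}$ into $\pi(f-g)+2t\,\Ima\sprod{f-g}{f}$, from which the bound $\norm{\pi(f-g)\Psi}+\abs{\Ima\sprod{g}{f}}\,\norm{\Psi}$ drops out after integrating in $t$. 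Your route trades the appeal to Lemma~\ref{lm:UNUphi} for an appeal to the Weyl/BCH identity (which, as you note, requires its own justification for unbounded generators), and then only needs the trivial commutation of $\pi(h)$ with its own unitary group. Both arguments are short and yield the identical constant; the paper's version has the minor advantage of being fully self-contained given the lemmas already proved there, while yours isolates the two contributions $2\norm{f-g}$ and $\abs{\Ima\sprod{f}{g}}$ more transparently as coming from the ``displacement'' and ``phase'' parts of the Weyl relation.
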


\begin{proof}
For any $\Psi\in D(N^{1/2})$, we have
\begin{align*}
   \norm{(e^{i\pi(f)}-e^{i\pi(g)})\Psi}&=\norm{e^{-i\pi(g)}e^{i\pi(f)}\Psi-\Psi}=\norm{\int\limits_0^1e^{-i\pi(g)t}\pi(f-g)e^{i\pi(f)t}\Psi dt} \\
                                    &\leq\int\limits_0^1\norm{e^{-i\pi(f)t}\pi(f-g)e^{i\pi(f)t}\Psi}dt \\
                                    &=\int\limits_0^1\norm{(\pi(f-g)+2t\ \Ima\sprod{f-g}{f})\Psi}dt \\
                                    &\leq \norm{\pi(f-g)\Psi}+\abs{\Ima\sprod{g}{f}}\ \norm{\Psi}.
\end{align*}
The lemma now follows from Lemma~\ref{lm:a-like-root-N}.
\end{proof}

\begin{lemma}
\label{lm:FrankSchlein}
Let $f\in L^2(\R^d)$ and let $F: L^2(\R^d)\to L^2(\R^d)\otimes L^2(\R^d)$ be defined by $(F\varphi)(x,k) =  \varphi(x)e^{-ikx}f(k)$. Then, for all $\Psi\in D(H_0)$,
$$
   \norm{a(F)\Psi} \leq C_f \norm{\sqrt{N}(1-\Delta)^{1/2}\Psi}, 
$$
where
$$
   C_f:=\left(\sup\limits_{h\in\R^d}\int\frac{\abs{f(k)}^2}{1+(h-k)^2}dk\right)^{1/2}. 
$$
\end{lemma}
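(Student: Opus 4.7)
The plan is to compute $\norm{a(F)\Psi}^2$ level by level in Fock space and to exploit the oscillation $e^{-ikx}$ in the kernel of $F$ by passing to the partial Fourier transform in the electron variable. Starting from~\eqref{def:annihilation} and the formula $(F^*\psi)(x)=\int e^{ikx}\overline{f(k)}\psi(x,k)\,dk$, one finds
\begin{equation*}
   (a(F)\Psi^{(n)})(x,k_1,\ldots,k_{n-1})=\sqrt{n}\int e^{ikx}\overline{f(k)}\,\Psi^{(n)}(x,k,k_1,\ldots,k_{n-1})\,dk.
\end{equation*}
Writing $\widehat{\Psi^{(n)}}$ for the Fourier transform of $\Psi^{(n)}$ in $x$ only, the factor $e^{ikx}$ becomes a translation by $k$ in the dual variable $h$, so that
\begin{equation*}
   \widehat{a(F)\Psi^{(n)}}(h,k_1,\ldots,k_{n-1})=\sqrt{n}\int\overline{f(k)}\,\widehat{\Psi^{(n)}}(h-k,k,k_1,\ldots,k_{n-1})\,dk.
\end{equation*}

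The decisive step is a Cauchy--Schwarz estimate in $k$ with weight $(1+(h-k)^2)^{-1/2}$: for each $h$ and each tuple $(k_1,\ldots,k_{n-1})$,
\begin{equation*}
   \abs{\widehat{a(F)\Psi^{(n)}}(h,\ldots)}^2\leq n\,C_f^2\int(1+(h-k)^2)\abs{\widehat{\Psi^{(n)}}(h-k,k,\ldots)}^2\,dk,
\end{equation*}
where the definition of $C_f$ controls the $f$-integral uniformly in $h$. Integrating in $h$ and in the remaining phonon momenta and then performing the change of variables $(h,k)\mapsto(h-k,k)$ with unit Jacobian recasts the weight $1+h'^2$ as $(1-\Delta)$ acting on the electron variable, via Plancherel applied in reverse. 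This yields
\begin{equation*}
   \norm{(a(F)\Psi)^{(n-1)}}^2\leq n\,C_f^2\,\norm{(1-\Delta)^{1/2}\Psi^{(n)}}^2,
\end{equation*}
and summing over $n\geq 1$ and using $[N,\Delta]=0$ delivers the bound $\norm{a(F)\Psi}\leq C_f\norm{\sqrt{N}(1-\Delta)^{1/2}\Psi}$.

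I expect no real obstacle; the argument is routine once the Cauchy--Schwarz split with the weight $(1+(h-k)^2)^{-1}$ is identified. The only point requiring care is the bookkeeping of the partial Fourier transform and the change of variables $(h,k)\mapsto(h-k,k)$, which is precisely what causes the weight to sit on the electron variable $x$ at the end rather than on the integrated phonon momentum. The hypothesis $\Psi\in D(H_0)$ enters only to ensure that $\sqrt{N}\,(1-\Delta)^{1/2}\Psi$ lies in $\hilbert$, so that the inequality is meaningful.
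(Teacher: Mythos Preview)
Your argument is correct and complete. The paper's proof rests on the same idea---the Cauchy--Schwarz split with weight $(1+(h-k)^2)^{-1}$ after Fourier transforming in the electron variable---but packages it more abstractly. Instead of working level by level, the paper writes $a(F)\Psi = a\big((1-\Delta)^{-1/2}F\big)(1-\Delta)^{1/2}\Psi$ directly from the definition~\eqref{def:annihilation}, then invokes Lemma~\ref{lm:a-like-root-N} with $B=(1-\Delta)^{-1/2}F$ to reduce the whole problem to the single-particle bound $\|(1-\Delta)^{-1/2}F\|\leq C_f$, which is verified by exactly your Fourier computation. Your explicit Fock-level calculation unpacks this factorization by hand; the paper's route is shorter and highlights that the Fock-space combinatorics are already contained in Lemma~\ref{lm:a-like-root-N}, while yours has the virtue of being self-contained and making the role of the weight transparent.
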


This Lemma is due to Frank and Schlein, see Lemma~10 in~\cite{FrankSchlein2014}. For completeness of the present paper, we give a short proof. It is based on Lemma~\ref{lm:a-like-root-N} with $B=(1-\Delta)^{-1/2}F$.

\begin{proof}
Let $L= (1-\Delta)^{1/2}$ and note that, by \eqref{def:annihilation},
$$
      a(F)\Psi = a(L^{-1}F) L \Psi.
$$
From Lemma~ \ref{lm:a-like-root-N}, it thus follows that 
$$
     \|a(F)\Psi\| \leq \|L^{-1}F\| \|\sqrt{N}L\Psi\|,
$$
which is the desired estimate provided that $ \|L^{-1}F\| \leq C_f$. To prove this, let $\varphi \in L^2(\R^d)$. Then, by definition of $F$ and by Fourier transform, 
\begin{align*}
\|L^{-1}F\varphi\|^2 &= \int\left|(L^{-1}F\varphi)(x,k)\right|^2\,dxdk\\
    &= \int \frac{1}{1+p^2} |\hat\varphi(p+k)|^2|f(k)|^2\,dp\,dk\\
    &= \int \left( \int \frac{1}{1+(p-k)^2} |f(k)|^2\,dk\right) |\hat\varphi(p)|^2\,dp \leq C_f \|\varphi\|^2.
\end{align*}
\end{proof}


\section{An operator core in terms of coherent states}
\label{app:formal}

In this appendix, we apply the formal Expression~\eqref{def-H} to vectors from $D(H)$. By means of formal manipulations, we illustrate the argument given in the Introduction concerning cancellation of "vectors" outside the Hilbert space.

Let $\Omega\in\FF$ denote the vacuum vector. Then, the space
\begin{align*}
   D:=\span\{\gamma\otimes e^{-i\pi(f)}\Omega\ |\ \gamma,f\in C_0^{\infty}(\R^d)\}
\end{align*}
is a core of $H_0$, and hence, $U_{\infty}^*D$ is a core of $H$ by Theorem~\ref{thm:domain-H}. The elements $\Psi\in U_{\infty}^*D$ have the form
\begin{align}
   \Psi(x)=U_{\infty}^*\left(\gamma\otimes e^{-i\pi(f)}\Omega\right)(x)&=\gamma(x)e^{-i\pi(B_{\infty,x}+f)}\Omega \ e^{-i\Ima\sprod{B_{\infty,x}}{f}} \notag \\
   \label{equ:formOfTheVectorsOfTheHereConstructedDenseSubspace}       &=\varphi(x)\eta(x),
\end{align}
where
\begin{align*}
   \varphi(x):=\gamma(x)e^{-i\Ima\sprod{B_{\infty,x}}{f}}e^{-\frac{1}{2}\norm{B_{\infty,x}+f}^2}
\end{align*}
belongs to $C_0^{\infty}(\R^d)$ and
\begin{align*}
   \eta(x):=\sum\limits_{n\geq 0}\frac{1}{n!}a^*(B_{\infty,x}+f)^n\Omega.
\end{align*}
We now formally apply $-\Delta+N+a(G_{\infty})+a^*(G_{\infty})$ to~\eqref{equ:formOfTheVectorsOfTheHereConstructedDenseSubspace}. Using the Leibniz rule to compute $\Delta\Psi$, we obtain
\begin{align*}
   -\Delta\Psi\phantom{(x)}&=(-\Delta\varphi)\eta-2\nabla\varphi\cdot\nabla\eta+\varphi(-\Delta\eta), \\
   N\Psi(x)&=a^*(B_{\infty,x}+f)\Psi(x), \\
   a(G_{\infty,x})\Psi(x)&=\sprod{G_{\infty,x}}{B_{\infty,x}+f}\Psi(x), \\
   a^*(G_{\infty,x})\Psi(x)&=\varphi(x) a^*(G_{\infty,x})\eta(x), \\
   \varphi(x)(-\Delta\eta)(x)&=\varphi(x) a^*(k^2B_{\infty,x})\eta(x)+\varphi(x) a^*(kB_{\infty,x})^2\eta(x).
\end{align*}
All terms on the right-hand side of these five equations are Hilbert space vectors, with the exception of $\varphi a^*(k^2B_{\infty})\eta$ and $\varphi a^*(G_{\infty})\eta$. The sum of these two terms, however, is
$$
   \varphi a^*(k^2B_{\infty}+G_{\infty})\eta=\varphi a^*(G_{\ir}-B_{\infty})\eta,
$$
which is a Hilbert space vector again. Altogether, we get the formal result
\begin{align}
   &(-\Delta+N+a(G_{\infty,x})+a^*(G_{\infty,x}))\Psi(x) \notag \\
   \begin{split}   \label{equ:altogetherFormalResultOfTheFormalComputation}
      &=(-\Delta\varphi)(x)\eta(x)+2(i\nabla \varphi)(x)\cdot a^*(kB_{\infty,x})\eta(x)+a^*(kB_{\infty,x})^2\Psi(x) \\
      &\phantom{=(-\Delta}+a^*(f)\Psi(x)+a^*(G_{\ir,x})\Psi(x)+\sprod{G_{\infty,x}}{B_{\infty,x}+f}\Psi(x),
   \end{split}
\end{align}
which is a Hilbert space vector. A rigorous application of the operator $H=U_{\infty}^*H_{\infty}'U_{\infty}$ (see Theorem~\ref{thm:domain-H}) on the vector $\Psi$ from Equation~\eqref{equ:formOfTheVectorsOfTheHereConstructedDenseSubspace}, which is a long straightforward calculation, leads to exactly the same result~\eqref{equ:altogetherFormalResultOfTheFormalComputation}.


\bigskip\noindent
\textbf{Acknowledgements:} We thank Joachim Kerner and Ioannis
Anapolitanos for many discussions
at an early stage of this work. Ioannis also provided a first version of
the proof of Theorem~\ref{thm:U-exists}. The work of Andreas W\"unsch was
supported by the \emph{Deutsche Forschungsgemeinschaft (DFG)} through the Research
Training Group 1838: \emph{Spectral Theory and Dynamics of Quantum Systems}.



\begin{thebibliography}{10}

\bibitem{AmmariZied2000}
Zied Ammari.
\newblock Asymptotic completeness for a renormalized nonrelativistic
  {H}amiltonian in quantum field theory: the {N}elson model.
\newblock {\em Math. Phys. Anal. Geom.}, 3(3):217--285, 2000.

\bibitem{BratteliRobinsonII}
Ola Bratteli and Derek~W. Robinson.
\newblock {\em Operator algebras and quantum statistical mechanics. 2}.
\newblock Texts and Monographs in Physics. Springer-Verlag, Berlin, second
  edition, 1997.
\newblock Equilibrium states. Models in quantum statistical mechanics.

\bibitem{FrankSchlein2014}
Rupert~L. Frank and Benjamin Schlein.
\newblock {Dynamics of a strongly coupled polaron}.
\newblock {\em Lett. Math. Phys.}, 104(8):911--929, 2014.

\bibitem{FPZ1950}
H.~Fr{\"o}hlich, H.~Pelzer, and S.~Zienau.
\newblock Xx. properties of slow electrons in polar materials.
\newblock {\em The London, Edinburgh, and Dublin Philosophical Magazine and
  Journal of Science}, 41(314):221--242, 1950.

\bibitem{Froehlich1974}
J{\"u}rg Fr{\"o}hlich.
\newblock Existence of dressed one electron states in a class of persistent
  models.
\newblock {\em Fortschritte der Physik}, 22(3):159--198, 1974.

\bibitem{KatoPerturbation1976}
Tosio Kato.
\newblock {\em Perturbation theory for linear operators}.
\newblock Springer-Verlag, Berlin-New York, second edition, 1976.
\newblock Grundlehren der Mathematischen Wissenschaften, Band 132.

\bibitem{LiebThomas1997}
Elliott~H. Lieb and Lawrence~E. Thomas.
\newblock Exact ground state energy of the strong-coupling polaron.
\newblock {\em Comm. Math. Phys.}, 183(3):511--519, 1997.

\bibitem{LiebYamazaki1958}
Elliott~H. Lieb and Kazuo Yamazaki.
\newblock Ground-state energy and effective mass of the polaron.
\newblock {\em Phys. Rev.}, 111:728--733, Aug 1958.

\bibitem{Loewen1988}
H.~L{\"o}wen.
\newblock Spectral properties of an optical polaron in a magnetic field.
\newblock {\em J. Math. Phys.}, 29(6):1498--1504, 1988.

\bibitem{Nelson1964}
Edward Nelson.
\newblock {Interaction of nonrelativistic particles with a quantized scalar
  field}.
\newblock {\em J. Mathematical Phys.}, 5:1190--1197, 1964.

\bibitem{ReedSimonI1972}
Michael Reed and Barry Simon.
\newblock {\em Methods of modern mathematical physics. {I}. {F}unctional
  analysis}.
\newblock Academic Press, New York-London, 1972.

\bibitem{ReedSimonII1975}
Michael Reed and Barry Simon.
\newblock {\em Methods of modern mathematical physics. {II}. {F}ourier
  analysis, self-adjointness}.
\newblock Academic Press [Harcourt Brace Jovanovich, Publishers], New
  York-London, 1975.

\bibitem{Schmuedgen}
Konrad Schm{\"u}dgen.
\newblock {\em Unbounded self-adjoint operators on {H}ilbert space}, volume 265
  of {\em Graduate Texts in Mathematics}.
\newblock Springer, Dordrecht, 2012.

\end{thebibliography}

\end{document}